\newcommand{\nt}[1]{\begin{tabular}[c]{@{}l@{}}#1\end{tabular}}
\newcolumntype{L}[1]{>{\raggedright\let\newline\\\arraybackslash\hspace{0pt}}m{#1}}
\def\X{$\times$}
\title{Discovery Data Topology with the Closure Structure. Theoretical and practical aspects}
\titlerunning{Discovery Data Topology with the Closure Structure}
\author{Tatiana Makhalova \and Aleksey Buzmakov \and Sergei O. Kuznetsov \and Amedeo Napoli}
\institute{Tatiana Makhalova \at
    Universit\'{e} de Lorraine, CNRS, Inria, LORIA, F-54000 Nancy, France \\
    \email{tatiana.makhalova@inria.fr}
    \and
    Aleksey Buzmakov \at 
    National Research University Higher School of Economics, Perm, Russia\\
    \email{avbuzmakov@hse.ru}
    \and
    Sergei O. Kuznetsov \at 
    National Research University Higher School of Economics, Moscow, Russia\\
    \email{skuznetsov@hse.ru}
    \and 
    Amedeo Napoli \at
    Universit\'{e} de Lorraine, CNRS, Inria, LORIA, F-54000 Nancy, France \\
    \email{amedeo.napoli@loria.fr}
}
\date{}
\begin{document}

\maketitle

\begin{abstract}
 In this paper, we are revisiting pattern mining and especially itemset mining, which allows one to analyze binary datasets in searching for interesting and meaningful association rules and respective itemsets in an unsupervised way.
While a summarization of a dataset based on a set of patterns does not provide a general and satisfying view over a dataset, we introduce a concise representation --the closure structure-- based on closed itemsets and their minimum generators, for capturing the intrinsic content of a dataset.
The closure structure allows one to understand the topology of the dataset in the whole and the inherent complexity of the data.
We propose a formalization of the closure structure in terms of Formal Concept Analysis, which is well adapted to study this data topology.
We present and demonstrate theoretical results, and as well, practical results using the GDPM algorithm.
GDPM is rather unique in its functionality as it returns a characterization of the topology of a dataset in terms of complexity levels, highlighting the diversity and the distribution of the itemsets.
Finally a series of experiments shows how GDPM can be practically used and what can be expected from the output.

\keywords{Closure Structure \and Itemset Mining  \and Closed Itemset \and Equivalence Class \and Key and Passkey \and Data Topology}
\end{abstract}

%

\section{Introduction}

In this paper, we are revisiting pattern mining and especially itemset mining, although this research area did not attract much attention in the last years.
Actually, itemset mining allows to analyze binary datasets in searching for interesting and meaningful association rules and respective itemsets in an unsupervised way~\citep{han2011data}.
There are two main existing approaches to itemset mining (IM) which are based on
(i) an exhaustive enumeration of itemsets satisfying a set of predefined constraints followed by an assessment of the resulting itemsets --thanks to an interestingness measure for example-- for building with the most interesting itemsets the most concise representation of the data,
or (ii) sampling small sets of itemsets that meet a given set of constraints.

In the first approach, we can distinguish miners based on frequent itemsets.
Over the last decades, a large variety of algorithms for mining frequent itemsets was proposed, e.g., join-based algorithms~\citep{agrawal1994fast} which rely on a level-wise exploration, where $(k+1)$-itemsets are derived from $k$-itemsets, vertical exploration~\citep{shenoy2000turbo}, projection-based techniques~\citep{han2000mining}, etc.
This intensive development of algorithms of frequent itemset mining (FIM) is mainly due to the fact that frequent itemsets have an intuitive interpretation, namely ``the most interesting itemsets are frequently occurring in the data''.
Moreover, frequency is an antimonotonic measure w.r.t. the size of the itemset and this allows to enumerate very efficiently all frequent itemsets and only them.
However, frequent itemset mining shows several drawbacks as frequent itemsets are rather not (necessarily) interesting and in general very redundant.
In addition, when the frequency threshold decreases, the problems of combinatorial explosion and itemset redundancy become more acute.

Focusing on closed itemsets~\citep{pasquier1999discovering,zaki1998theoretical} allows for a significant reduction of the number of itemsets by replacing a whole class of itemsets having the same support with the largest one, i.e. the corresponding closed itemset.
Nowadays, there exist very efficient algorithms for computing frequent closed itemsets~\citep{hu2017alpine,uno2005lcm}.
Even for a low frequency threshold, they are able to efficiently generate an exponential number of closed itemsets.
However, the efficient generation of closed itemsets solves the problem of the exponential explosion of itemsets only partially since the main difficulties appear after, when the generated itemsets are processed.

In the second approach, i.e. an alternative to the exhaustive enumeration of itemsets, more recent algorithms are based on ``sampling'' \citep{dzyuba2017flexible} and on a gradual search for itemsets according to an interestingness measure or a set of constraints \citep{smets2012slim}.
For example, Sofia is an algorithm which starts from one itemset using derivation rules and outputting a set of itemsets organized within a lattice and satisfying the constraints related to the derivation rules \citep{buzmakov2015fast}.
These algorithms usually output a rather small set of itemsets while they may provide only an approximate solution.

Considering both approaches, we can remark that, even if they use quite different techniques, i.e. exhaustive enumeration on the one hand and sampling on the other hand, these approaches rely on the same assumption, namely that the intrinsic structure of the dataset under study can be understood by means of selecting itemsets w.r.t. a subjective interestingness measure or a set of constraints.
Then, in each approach a particular set of itemsets is returned, which offers a ``multifaceted view'' about the data, but does not provide an ``intrinsic structure'' underlying the data.
This is precisely such a structure that we introduce and study here after.

In this paper\footnote{%
This paper is a revised and extended version of \citep{mkn2020}.
}
we propose a method and an algorithm for revealing and understanding the intrinsic structure of a dataset, called the \emph{closure structure}, which shows the distribution of the itemsets in the data in terms of frequency, and in addition supports an interpretation of the content of a dataset.
This closure structure is based on the ``closed itemsets'' and minimum elements of their ``equivalence classes'' in an unsupervised setting, and may be computed independently of any interestingness measure or set of constraints.
Actually, the closed itemsets are the base of what we call the ``topology of the dataset'', replacing the traditional open sets in mathematical topology.
Closed itemsets play the same role as open sets and are used to cover the dataset in a specific way, giving rise to the closure structure, which is based on a partition of levels, induced by the set of equivalence classes related to these closed itemsets.

The closure structure and its levels provide a representation of the complexity of the content of a dataset.
We propose a formalization of the closure structure in terms of Formal Concept Analysis \citep{ganter1999formal}, which is well adapted to the of study closed itemsets, equivalence classes, and data topology.
We present and demonstrate theoretical results about the closure structures, among which a characterization of the closure structure and complexity results about itemset mining, and as well a lower bound on the size of the pattern space --based on a lattice-- which is a good indicator of the complexity of the dataset.

We introduce the GDPM algorithm for ``Gradual Discovery in Pattern Mining'' which computes the closure structure of a dataset.
Given a dataset, the GDPM algorithm returns a characterization of the levels constituting the closure structure of this dataset.
We study the complexity of the algorithm and compare it with related algorithms.
We also show in a series of experiments how GDPM can be used and what can be expected.
For example, it is possible to determine which are the levels where the itemsets are most diverse, how the frequency is distributed among the levels, and when a search for interesting itemsets can be stopped without loss.
Then we propose a comparative study on the intrinsic complexity and the topology and of a collection of public domain datasets.
To the best of our knowledge, GDPM is rather unique in its functionality as it returns a characterization of the topology of a dataset in terms of the closure structure, allowing to guide the mining of the dataset.

The paper has the following structure.
In Section~\ref{sec:basics} we recall the basic notions.
In Section~\ref{sec:closure_structure} we introduce the closure structure, discuss its properties and describe the GDPM algorithm for its computing.
Section~\ref{sec:experiments} contains the results of the experiments.
In Section~\ref{sec:conclusion} we conclude and give directions of future work.


\section{Related work}\label{sec:related_work}

In Data Mining and Knowledge Discovery computing itemsets is of main importance since itemsets are used to compute (i) a succinct representation of a dataset by small set of interesting and non-redundant patterns, (ii) a small set of non-redundant high-quality association rules. 

There are several approaches to compute itemsets: (i) an exhaustive enumeration of all itemsets followed by a selection of those satisfying imposed constraints~\citep{fpm14interestingness}, (ii) a gradual enumeration of itemsets guided by an objective (or by constraints)~\citep{smets2012slim}, (iii) mining top-$k$ itemsets w.r.t. constraints~\citep{mampaey2012summarizing}, (iv)~sampling a subset of itemsets w.r.t. a probability distribution that conforms to an interestingness measure
~\citep{boley2011direct,boley2012linear}. To reduce redundancy when enumerating itemsets the search space can be shrunk to \emph{closed itemsets}, i.e., the maximal itemsets among those that are associated with a given set of objects (support). Meanwhile, in sampling techniques with rejection this restriction of arbitrary itemsets to closed ones may increase the sampling time~\citep{boley2010formal}.

Generally, two type of data can be defined: (i) ``deterministic'', where the object-attribute relations reflect non-probabilistic properties and where the noise rate is low, and (ii) ``probabilisic'', where the relations between objects and attributes are of stochastic nature. For example, deterministic data include the profiles of users with features such as education, gender, marital status, number of children, etc. By contrast, probabilistic data may describe basic item lists, where each item is present or absent with a certain probability (since one could forget about an item or to buy an item spontaneously).

For probabilistic data, enumerating closed itemsets may be infeasible in practice, since the number of closed itemsets may grow exponentially fast w.r.t. the noise rate~\citep{klimushkin2010approaches}. However, for ``deterministic'' data it might be very important to have a complete set of itemsets that meet user-defined constraints or to get a set of implications (i.e., association rules with confidence 1). In the last decades, substantial efforts in different directions have been made to establish a theoretical foundation of closed itemset mining within several theoretical frameworks: Formal Concept Analysis~\citep{ganter1999formal}, set systems~\citep{boley2007efficient,boley2007mlg,boley2010listing} and Data Mining~\citep{pasquier1999discovering,pasquier1999efficient,PasquierTBSL05}. 

A main challenge in closed itemset mining is computational complexity. The number of closed itemsets can be exponential in the dataset size.  In~\citep{kuznetsov1993fast} an efficient depth-first search algorithm, called CbO, for computing closed itemsets with polynomial delay was proposed. To avoid enumerating the same itemsets several times CbO relies on a canonicity test. Loosely speaking, a closed itemset $I_1$ passes the canonicity set if an itemset $I_2$ that generates $I_1$ is lexicographically smaller (we introduce this notion in Section~\ref{ssec:gdpm}) than $I_1$ itself. The execution trace of CbO is a spanning tree of the lattice composed of all closed itemsets ordered by set-inclusion. Further, other successor algorithms were developed within the FCA framework~\citep{kuznetsov2002comparing}. One of the most efficient algorithm from the CbO family is In-Close~\citep{andrews2014partial}, where the canonicity is leveraged to compute a partial closure, i.e., checking the closure on a subset of attributes that potentially may not pass the canonicity test. The latter allows to reduce the running time. 


Later, Uno and Arimura proposed the LCM algorithm~\citep{uno2004efficient}, which computes all frequent closed itemsets with polynomial-delay and polyno\-mial-space complexity and is quite efficient in practice~\citep{uno2004lcm}. 
However, LCM requires a frequency threshold. The successor of LCM, Alpine algorithm~\citep{hu2017alpine}, does not have this drawback. It computes (closed) itemsets of decreasing frequency gradually and ensures completeness of the results, i.e., being interrupted anytime it returns a complete set of frequent (closed) itemset for a certain frequency threshold. 
Recently, in~\citep{janostik2020lcm} it was shown that LCM is basically CbO with some 
implementation tricks ensuring to LCM a very good performance, e.g., ``occurrence deliver'' (storing the attribute extensions in arraylists to avoid multiple traversal of a dataset), usage of ``conditional databases'' (reducing the dataset size in recursive calls by removing ``irrelevant'' objects and attributes and merging repetitive objects). 

In~\citep{krajca2009comparison} the authors study how different data structures, e.g., bitarrays, sorted linked lists, hash tables and others, affect the execution time. In the experiments, it was shown that binary search trees and linked lists are appropriate data structures for large sparse datasets, while bitarrays provide better execution time for small or dense datasets.

The theoretical aspects of closed itemset computing were studied in parallel within the framework of \emph{set systems}, where some elements commonly used in FCA were re-discovered. For example, in~\citep{boley2007efficient} the authors consider inductive generators and prove that every closed itemset (except $\emptyset$) has an inductive generator. Roughly speaking, it means that for each closed itemset $D$ there exists a closed itemset $D'$ and an item $e \in D \setminus D'$ such that the closure of $D'\cup \{e\}$ is equal to $D$. It follows directly from this definition that inductive generators are parents of the closed itemsets in the spanning tree used by the CbO algorithm. Later, in~\citep{arimura2009polynomial} a polynomial-delay and 
polynomial-space algorithm, called CloGenDFS, was proposed. As CbO, this algorithm avoids the explicit duplication test for itemsets. The execution tree used by CloGenDFS (called a {family tree}) is quite similar to that used by CbO in the bottom-up strategy. 

The results described above are related to \emph{computational aspects} of closed itemset mining. However, they do not address the problem of discovering an \textit{intrinsic structure} underlying the set of closed itemsets and more generally the dataset itself. In a seminal work related to this direction~\citep{pasquier1999discovering,stumme2002computing}, it was proposed to consider minimal (by set-inclusion) itemsets among all itemsets that are associated with the same set of objects. These sets are used to find a non-redundant set of association rules and implications (rules with confidence 1) that summarizes the data. In this work we consider minimum (by size) itemsets and use them to characterize this intrinsic structure and thus the data complexity.

The minimum implication base~\citep{ganter1999formal} --also known as Du\-quenne-Guigues basis~\citep{guigues1986familles}-- is one first way of characterizing the ``complexity'' of the set of closed itemsets w.r.t. implications. This basis represents the minimal set of implications sufficient to infer all other implications using the Armstrong rules~\citep{armstrong1974dependency}. Computing a minimal basis of implications is of particular importance in database theory, because there exists a one-to-one correspondence between the functional dependencies in an original database and implications computed on an associated database~\citep{ganter1999formal,codocedo2014semantic,baixeries2018characterizing}. 

Besides computing the minimal set of implications, there exists a range of approaches to computing non-redundant (not necessarily minimal in the sense of inference) set of implications or association rules. For example, in~\citep{zaki2000generating} the authors propose an approach to compute a set of \emph{general} implications and association rules based on closed itemsets. General rules have the smallest antecedent and consequent among the equivalent ones. In~\citep{bastide2000mining} an alternative approach for computing non-redundant implications and association rules was proposed. Each rule in this approach has the smallest antecedent and the largest consequent among the equivalent ones. And these rules are considered as ``good rules'', especially for interpretation purposes.

All approaches mentioned above rely implicitly or explicitly on an intrinsic structure, however this structure was rarely if ever formalized and its properties have not yet been studied in depth.

In this paper, we define a level-wise structure for representing the ``intrinsic structure'' of a dataset w.r.t. closed itemsets, that we call ``\emph{closure structure}''. We define the ``complexity'' of closed itemsets w.r.t. the size of minimal/minimum itemset common to a set of itemsets equivalent to the closed itemset itself. The levels of the closure structure are composed of closed itemsets having the same complexity. We also discuss computational issues, as well as theoretical and empirical properties of the closure structure.  

\section{Basic notions} \label{sec:basics}

In this study we rely on closed itemsets since (i)~a closed itemset is the unique maximal itemset that represents all (maybe numerous) itemsets that cover the same set of objects (same image in terms of Data Mining), (ii) a closed itemset provides a \emph{lossless representation} of these itemsets, and (iii) a closed itemset provides a concise representation of implications and association rules~\citep{pasquier1999efficient}. For dealing with closed itemsets, we use the framework of FCA~\citep{ganter1999formal}, an applied branch of the lattice theory related to Data Mining and Machine Learning. Some properties of the closure structure are obtained using theoretical results from FCA and are considered in the next sections. Let us recall the main notions used in this paper.

We deal with binary datasets which are represented as formal contexts throughout the paper. A \textit{formal context} is a triple $\mathbb{K}=\left ( G,M,I \right )$, where $G$ is a set of objects, $M$ is a set of attributes and $I\subseteq G\times M$ is a relation called incidence relation, i.e., $\left ( g, m \right )\in I $ if object $g$ has attribute $m$. 

We also may consider numerical data represented as a many-valued context $\mathbb{K}_{num} =  (G, S, W, I_{num})$, where the fact $(g,s,w) \in I_{num}$ means that object $g$ has value $w$ for attribute $s$. A many-valued context can be transformed into a binary context in several ways.
In this study, for the sake of simplicity, we replace an interval of values or a single value with a binary attribute.
An example of this transformation is given in Table~\ref{tab:many_valued_context}.
For example, binary attributes $a$-$c$ correspond to intervals of real-valued attribute $s_1$, binary attributes $d$-$f$ correspond to intervals of integer-valued attribute $s_2$, and attributes $g$, $e$ corresponds to integer values of $s_5$, namely  4, 5, respectively. In both numerical and the corresponding binarized data, each object is described by the same number of attributes, e.g., in the considered example, each object has 3 numerical and 3 binary attributes.

\begin{table}
    \caption{A multi-valued context (left), its representation used for binarization (middle), and the corresponding binary context (right)}
    \label{tab:many_valued_context}
    \begin{minipage}[b][][t]{.25\textwidth}
        \setlength{\tabcolsep}{2.5pt}
        \begin{tabular}{l|lll} \toprule
             &$s_1$&$s_2$&$s_3$\\ \midrule 
            $g_1$ & 1.3 & 6 & 4 \\
            $g_2$ & 2.1 & 2 & 4 \\
            $g_3$ & 2.5 & 5 & 5 \\
            $g_4$ & 1.8 & 1 & 5 \\
            $g_5$ & 3.3 & 3 & 5 \\
            $g_6$ & 1.6 & 4 & 4 \\\bottomrule
        \end{tabular}
    \end{minipage}
    \begin{minipage}[b][][t]{.27\textwidth}
        \setlength{\tabcolsep}{2.pt}
        \begin{tabular}{llll}\toprule
            &$s_1$&$s_2$&$s_3$ \\\midrule 
            $g_1$ & [1,2) & [5,7) & 4 \\
            $g_2$ & [2,3) & [1,3) & 4 \\
            $g_3$ & [2,3)& [5,7) & 5 \\
            $g_4$& [1,2) & [1,3) & 5 \\
            $g_5$& [3,4)& [3,5) & 5 \\
            $g_6$& [1,2) & [3,5) & 4\\\bottomrule
        \end{tabular}
    \end{minipage}
    \begin{minipage}[b][][t]{.4\textwidth}
        \setlength{\tabcolsep}{2.pt}
        \begin{tabular}{l|lll|lll|ll} \toprule
             &\rotatebox{90}{$s_1 \in [1,2)$}&\rotatebox{90}{$s_1 \in [2,3)$}&\rotatebox{90}{$s_1 \in [3,4)$}&\rotatebox{90}{$s_2 \in [1,3)$}&\rotatebox{90}{$s_2 \in [3,5)$}&\rotatebox{90}{$s_2 \in [5,7)$}&\rotatebox{90}{$s_3 = 4$}&\rotatebox{90}{$s_3 = 5$}\\\midrule 
             &$a$&$b$&$c$&$d$&$e$&$f$&$g$&$i$\\ \midrule 
            $g_1$ &$a$&  &  &  &  &$f$&$g$&  \\
            $g_2$ &  &$b$&  &$d$&  &  &$g$&  \\
            $g_3$ &  &$b$&  &  &  &$f$&  &$i$\\
            $g_4$ &$a$&  &  &$d$&  &  &  &$i$\\
            $g_5$ &  &  &$c$&  &$e$&  &  &$i$\\
            $g_6$ &$a$&  &  &  &$e$&  &$g$& \\\bottomrule
        \end{tabular}
    \end{minipage}

\end{table}

Two derivation operators $\left ( \cdot \right )' $ are defined for $A\subseteq G$ and $B \subseteq M$ as follows:
$$A' = \left \{ m \in M \mid \forall  g \in A : gIm  \right \}, \;
B' = \left \{ g \in G \mid  \forall  m \in B : gIm  \right \}.$$

Intuitively, $A'$ is the set of common attributes to objects in $A$, while $B'$ is the set of objects which have all attributes in $B$. 

Sets $A \subseteq G$, $B \subseteq M$, such that $A = A''$ and $B = B''$, are closed sets. For $A \subseteq G$, $B \subseteq M$, the pair $(A,B)$ such that $A'=B$ and $B'=A$, is called a \textit{formal concept}, then $A$ and $B$ are closed sets and called \textit{extent} and \textit{intent}, respectively.
In Data Mining~\citep{pasquier1999efficient}, an attribute is also called an item, an \textit{intent} corresponds to a \textit{closed itemset} or a \textit{closed pattern}, and the extent is also called the \textit{image} of the itemset.
The number of items in an itemset $B$ is called the \emph{size} or \emph{cardinality} of $B$.

A partial order $\leq$ is defined over the set of concepts as follows: $\left(A,B\right)\leq\left(C,D\right)$ iff $A \subseteq C$, or dually $D \subseteq B$. A pair $\left(A,B\right)$ is a subconcept of $\left(C,D\right)$, while $\left(C,D\right)$ is a superconcept of $\left(A,B\right)$. With respect to this partial order, the set of all formal concepts forms a complete lattice $\mathcal{L}$ called the \textit{concept lattice} of the formal context $(G,M,I)$. The size of the concept lattice is bounded from above by $O(2^{\min(|G|, |M|)})$.

In this paper we deal with closed itemsets, i.e., intents of formal concepts. The whole set of closed itemsets for a context $(G,M,I)$ we denote by $\mathcal{C}$. We distinguish maximal/minimal and maximum/minimum itemsets. Let $\mathcal{S}$ be a set of itemsets. We call an itemset $X \in \mathcal{S}$ maximal w.r.t. $\mathcal{S}$ if there is no other itemset in $\mathcal{S}$ that includes $X$. Minimal itemset is defined dually. An itemset is called maximum in $\mathcal{S}$ if it has the largest \emph{size} or \emph{cardinality}, i.e., consists of the maximal number of items. Minimum itemset is defined dually.

Given an itemset $X$, its equivalence class $Equiv(X)$ is the set of itemsets with the same extent, i.e., $Equiv(X) = \{Y \mid Y \subseteq X, X' = Y'\}$.
A \textit{closed itemset} $X''$ is the maximum (largest cardinality) itemset in the equivalence class $Equiv(X)$. Thus, the concept lattice concisely represents the whole pattern search space, i.e., all possible combinations of items, where itemsets having the same support are encapsulated in the corresponding closed itemsets~\citep{pasquier1999discovering}.

\begin{definition}
A \emph{key} $X \in Equiv(B)$ is a minimal generator in $Equiv(B)$,  i.e., for every $Y \subset X$ one has $Y' \neq X' = B'$. We denote the set of keys (\textit{key set}) of $B$ by $Key(B)$. 
\label{def:key}
\end{definition}

Hence, every proper subset of a key is a key that have a different closure~\citep{stumme2002computing}.
An equivalence class is \textit{trivial} if it consists of only one closed itemset. 

\begin{definition}
An itemset $X \in Key(B)$ is called a \emph{passkey} if it has the smallest size among all keys in $Key(B)$. We denote the set of passkeys by $pKey(B) \subseteq Key(B)$. For a closed itemset $B$ the \emph{passkey set} is given by $pKey(B) = \{ X \mid X \in Key(B), |X| = min_{Y \in Key(B)}|Y|\}$. 
\label{def:min_key}
\end{definition}

In an equivalence class there can be several passkeys, but only one closed itemset.
All passkeys have the same size in an equivalence class.
The size of the closed itemset is maximum among all itemsets in this equivalence class. 

\noindent\textbf{Example.} Let us consider the dataset in Fig.~\ref{fig:example}. The corresponding concept lattice is given on the right. Among all 16 closed itemsets only five have non-trivial equivalence classes, namely $ade$, $abc$, $bef$, $bdef$ and $abcdef$. Their keys and passkeys are listed in Table~\ref{tab:keys}. The remaining itemsets have trivial equivalence classes $Equiv(B) = \{B\}$. Moreover, every closed itemset $B$ with a trivial equivalence class has $|B|$ immediate subsets that are closed itemsets. For example, itemset $acd$ has a trivial equivalence class, and all its immediate subsets $ac$, $ad$ and $cd$ are closed. In Section~\ref{ssec:ideal} (Proposition 1) we analyze this property.

\begin{figure}
    \centering
	\begin{minipage}{.25\textwidth}
	\centering
	\setlength{\tabcolsep}{2.pt}
	\begin{tabular}{l|llllll}
			\toprule
			$g_1$&$a$&$b$&$c$& &&\\
     		$g_2$&$a$&   &$c$&$d$&   & \\
			$g_3$&$a$&   &   &$d$&$e$& \\
			$g_4$&   &   &$c$&$d$&   & \\
			$g_5$&   &$b$&   &$d$&$e$&$f$ \\
			$g_6$&   &$b$&   &   &$e$&$f$ \\\bottomrule
	\end{tabular}
	\end{minipage}
	\begin{minipage}{.65\textwidth}
	    \input{lattice_2lines}
	\end{minipage}
	\caption{A dataset of transactions (left) and its concept lattice (right)}
	\label{fig:example}
\end{figure}

\begin{table}[h]
	\caption{Non-trivial equivalence classes of closed itemsets for the dataset from Fig.~\ref{fig:example}. Closed itemsets are given in column $B$}
	\label{tab:keys}
	\centering
    \begin{tabular}{l|l|l|l}
        \toprule
        \begin{tabular}[c]{@{}l@{}}$B$\end{tabular}& \multicolumn{1}{c}{$pKey(B)$} & \multicolumn{1}{c}{$Key(B)$} & \multicolumn{1}{c}{$Equiv(B)$} \\\midrule
        $ade$ & $ae$ & $ae$ & $ae$, $ade$ \\\hline
        $abc$ & $ab$, $bc$ & $ab$, $bc$ & $ab$, $bc$, $abc$ \\\hline
        $bef$ & $f$ & $f$, $be$ & $f$, $be$, $ef$, $bf$, $bef$ \\\hline
        $bdef$ & $bd$, $df$ & $bd$, $df$ & $bd$, $df$, $def$, $bdf$, $bde$, $bdef$ \\\hline
        $abcdef$ & $af$, $ce$, $cf$&\begin{tabular}[c]{@{}l@{}}$af$, $ce$, $cf$,\\$abd$, $abe$, $bcd$\end{tabular}&  \begin{tabular}[c]{@{}l@{}}$af$, $ce$, $cf$, $abd$, $abe$, $abf$, $ace$, $acf$, $adf$,\\$aef$, $bcd$, $bce$, $bcf$, $bde$, $bdf$, $cde$, $cdf$,\\$cef$, $def$, $\ldots$, $abcdef^\ast$\end{tabular}  \\\bottomrule
    \end{tabular}\\
    $^\ast$ The equivalence class includes all itemsets that contain a key from $Key(abcdef)$.

\end{table}

Among closed itemsets with non-trivial equivalence classes, only $bef$ and $abcdef$ have passkey sets that differ from the key sets, i.e., $pKey(bef) = \{f\}$, $Key(bef) = \{f\} \cup \{ be\}$, $pKey(abcdef) = \{af, ce, cf\}$, $Key(abcdef) = pKey(abcdef) \cup \{abd, abe, bcd\}$. The keys from $Key(B)$ are minimal by set-inclusion, i.e., there is no element in the equivalence class that is contained in a key. For example, for key $be \in Key(bef)$ there is no itemset in $Equiv(bef)$, except $be$ itself, that is included in $be$.
The passkeys are keys that have the smallest size. Closed itemset $bef$ has one passkey $f$. Below (see Fig.9) we give an example of a case with unique passkey and exponentially many keys.

Thus, a passkey is the shortest representative of the equivalence class, i.e., the smallest lossless description (meaning that the whole lattice can be rebuild using the set of passkeys). Let us consider how passkeys can be used to define the closure structure.

\section{Closure structure and its properties}\label{sec:closure_structure}

\subsection{Keys and passkeys}

When closed itemsets are ordered by set inclusion, the structure containing the whole set of closed itemsets is a complete lattice called the concept lattice (see example in Figure~\ref{fig:example}). However, in a lattice one is usually interested in local positions of the closed itemsets -or concepts- most often in direct neighbors~\citep{buzmakov2014scalable}. Usually, the global position of the concepts w.r.t. the largest and smallest elements is not considered as useful information. Moreover, for a closed itemset, there can be exponentially many -- w.r.t. its size and support -- different ways to reach it from the top or the bottom of the lattice.

In this section, we introduce the ``closure structure'', that was firstly introduced in our previous work~\citep{mkn2020}, where we call it ``minimum closure structure''. Unlike the concept lattice, the closure structure is composed of levels that reflect the ``intrinsic'' complexity of itemsets. The complexity of a closed itemset $X$ is defined by the size of its passkeys, i.e., cardinality minimum generators of $X$. 

In the previous sections we consider different elements of the equivalence class, namely, generators, minimal generators (keys) and minimum generators (passkeys). We can organize the itemset space into levels based on minimal generators. In Fig.~\ref{fig:key_based_splitting} we show such a splitting. A closed itemset $B$ is located at level $k$ if there exists a key of $B$ of size $k$. The keys of closed itemsets that differ from the corresponding closed itemset are given in parentheses. Some of the itemsets have more than one key, e.g., $abc$ has keys $ab$ and $bc$, $bdef$ has keys $bd$ and $df$. The problem is that a closed itemset may be located in several levels. For example, $bef$ appears at the 1st and 2nd levels since it has keys $f$ and $be$, closed itemset $abcdef$ belongs to the 2nd and 3rd levels. 
\begin{figure}[t]
    \centering
\setlength{\tabcolsep}{4pt}
\setlength\extrarowheight{5pt}
\begin{tabular}{lcccccccccc}
$\mathcal{C}_1$&$a$&$b$&$c$&$d$&$e$&{\begin{tabular}[c]{@{}c@{}}$bef$\\$\textcolor{gray}{(f)}$\end{tabular}}&& & \\ \hline
$\mathcal{C}_2$& \begin{tabular}[c]{@{}c@{}}$abc$\\$\textcolor{gray}{(ab,bc)}$\end{tabular}& \begin{tabular}[c]{@{}c@{}}$bdef$\\$\textcolor{gray}{(bd,df)}$\end{tabular} & \multicolumn{2}{c}{\begin{tabular}[c]{@{}c@{}}$abcdef$\\$\textcolor{gray}{(af,cf,ce)}$\end{tabular}}&\begin{tabular}[c]{@{}c@{}}$ade$\\$\textcolor{gray}{(ae)}$\end{tabular}& \begin{tabular}[c]{@{}c@{}}$bef$\\$\textcolor{gray}{(be)}$\end{tabular} &$ac$&$de$&$ad$&$cd$\\ \hline
$\mathcal{C}_3$&$acd$&\multicolumn{5}{c}{\begin{tabular}[c]{@{}c@{}}$abcdef$\\$\textcolor{gray}{(abd, abe, bcd)}$\end{tabular}}
\end{tabular}

    \caption{Splitting a set of closed itemsets from Fig.~\ref{fig:example} by levels w.r.t. the size of their keys. The keys that differ from their closures are given in the parentheses. The set-inclusion relations between itemsets are not shown}
    \label{fig:key_based_splitting}
\end{figure}

To eliminate this kind of redundancy, we proposed to order the set of closed itemsets w.r.t. passkeys or ``cardinality minimum generators''.
Since all the passkeys of a closed itemset have the same size, the closed itemset belongs only to one level and hence the corresponding organization is a partition. 

Let us consider this structure more formally. 

\subsection{Closure structure}\label{ssec:closure_structure}

As it was mentioned above, all passkeys of a closed itemset have the same size. A passkey is a minimal generator of minimum cardinality. A passkey corresponds to the shortest itemset among all itemsets composing an equivalence class and characterizes the complexity of a closed itemset in terms of its size.

Let $\mathcal{C}$ be the set of all closed itemsets for the context $(G,M,I)$, and $pKey(B)$ be the passkey set of a closed itemset $B \in \mathcal{C}$. The function $Level: \mathcal{C} \rightarrow \{0, \ldots, |M|\}$ maps a closed itemset $B$ to the size of its passkeys, i.e. $Level(B) = |X|$, where $X \in pKey(B)$ is an arbitrary passkey from $pKey(B)$. The closed itemsets having passkeys of size $k$ constitute the closure level $k$.

\begin{definition}
Let $\mathcal{C}$ be a set of all closed itemsets for a context $(G, M, I)$. Then the \emph{closure level} $\mathcal{C}_k$ is the subset of all closed itemsets with passkeys of size $k$, i.e., $\mathcal{C}_k = \{B \mid B \in \mathcal{C}, \exists D \in pKey(B), |D| = k\}$. 
\end{definition}

In the definition above, in order to define the closure level of a closed itemset $B$, we use an arbitrary passkey $D \in pKey(B)$.
We distinguish passkeys only when we compute the closure structure and we discuss these details later in Section~\ref{sec:algorithm}.

\begin{proposition} 
The set of closure levels of the context $(G,M,I)$ defines a partition over the set of closed itemsets $\mathcal{C}$ of $(G,M,I)$ into sets of itemsets of increasing complexity w.r.t. the size of the passkeys. 
\label{prop:closure_structure}
\end{proposition}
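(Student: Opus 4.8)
The plan is to show two things: first, that the closure levels $\mathcal{C}_0, \mathcal{C}_1, \ldots, \mathcal{C}_{|M|}$ cover $\mathcal{C}$ and are pairwise disjoint (so they form a partition), and second, that this partition orders closed itemsets by increasing complexity in the sense that $Level$ is a well-defined complexity measure consistent with the definition of passkeys.

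First I would establish that $Level: \mathcal{C} \to \{0, \ldots, |M|\}$ is a well-defined function. The key observation, already recorded in the excerpt just after Definition~\ref{def:min_key} (``All passkeys have the same size in an equivalence class''), is that for any closed itemset $B$ all elements of $pKey(B)$ have the same cardinality; hence $Level(B) = |X|$ does not depend on the choice of $X \in pKey(B)$. I also need $pKey(B) \neq \emptyset$: since $B$ itself belongs to $Equiv(B)$ and $Equiv(B)$ is finite and nonempty, $Key(B)$ is nonempty (any inclusion-minimal element of $Equiv(B)$ is a key), and therefore $pKey(B)$ — the keys of minimum size — is nonempty as well. Finally $0 \le Level(B) \le |M|$ because any key is a subset of $M$, and $Level(B) = 0$ exactly when $B = \emptyset'' $ (the closure of the empty set), consistent with the codomain including $0$.

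Next I would verify the partition properties directly from Definition~9 (closure levels). For \emph{covering}: every $B \in \mathcal{C}$ lies in $\mathcal{C}_{Level(B)}$, since taking $D \in pKey(B)$ with $|D| = Level(B)$ witnesses $B \in \mathcal{C}_{Level(B)}$; hence $\bigcup_{k=0}^{|M|} \mathcal{C}_k = \mathcal{C}$. For \emph{disjointness}: suppose $B \in \mathcal{C}_j \cap \mathcal{C}_k$ with $j \neq k$. Then there exist $D_1 \in pKey(B)$ with $|D_1| = j$ and $D_2 \in pKey(B)$ with $|D_2| = k$. But all passkeys of $B$ have the same size, so $j = |D_1| = |D_2| = k$, a contradiction. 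Hence the nonempty $\mathcal{C}_k$ are pairwise disjoint, and together with covering this yields a partition of $\mathcal{C}$.

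For the ``increasing complexity'' clause I would simply note that the index $k$ of the level $\mathcal{C}_k$ equals $Level(B)$ for every $B \in \mathcal{C}_k$, and $Level(B) = |X|$ for a minimum-cardinality key $X$; this is by definition the complexity measure introduced in Section~\ref{ssec:closure_structure} (the size of the shortest itemset in the equivalence class of $B$), so ordering the blocks by their index $k = 0, 1, \ldots, |M|$ is exactly ordering closed itemsets by increasing passkey size. I do not expect a genuine obstacle here: the statement is essentially a bookkeeping consequence of the fact, established earlier, that all passkeys of a fixed closed itemset share a common size. The only point requiring a small argument — and the one I would state carefully — is precisely that ``common size'' fact, namely that $pKey(B)$ is nonempty and cardinality-homogeneous; everything else is then a short deduction of covering and disjointness from Definition~9.
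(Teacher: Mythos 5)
Your proposal is correct and follows essentially the same route as the paper's own proof: both rest on the single observation that all passkeys of a fixed closed itemset have the same cardinality, from which membership in exactly one level (and hence the partition) follows. You merely spell out the covering, disjointness, and non-emptiness of $pKey(B)$ that the paper leaves implicit.
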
 

\begin{proof}
It follows from the definition of passkey that the passkeys associated with a closed itemset are all of the same size. Thus, each closed itemset belongs to only one closure level and the closure levels make a partition.
\smartqed
\end{proof}

\begin{definition}
The partition based on the set of closure levels of the context $(G,M,I)$ is called the \emph{closure structure} of $(G,M,I)$.

\end{definition}

The last non-empty level of the closure structure contains the largest passkeys. The size of such keys is the upper bound on the size of the generators of all closed itemsets in a given dataset, i.e., the least upper bound on the sizes of generators of the most ``complex'' closed itemsets.  

\begin{definition}
The \emph{closure index} of $\mathcal{C}$, denoted by $CI$, is the maximal number of non-empty level of the closure structure, i.e., $CI = \max \{ k \mid k = 1, \ldots, |M|, \, \mathcal{C}_k\neq \emptyset\}$. 
\label{def:CI}
\end{definition}

The closure index $CI$ is the largest size of an itemsets ``sufficient'' to represent any closed itemset. More formally, $\forall B \in \mathcal{C}, \; \exists X \in \mathcal{P(M)}, |X| \leq CI $ such that $X'' = B$. In such a way, the closure index characterizes complexity of the whole set of closed itemsets $\mathcal{C}$.

\noindent\textbf{Example.}
Let us consider the closure structure in Fig.~\ref{fig:closure_structure}.
Here, in contrast to the situation in Fig.~\ref{fig:key_based_splitting}, each closed itemset appears only once. A closed itemset from the $k$-th level may still have several passkeys of size $k$, but all these keys are of the same size. For example, $abc$ has two passkeys: $ab$ and $bc$, closed itemset $abcdef$ has three passkeys, i.e., $af$, $cf$ and $ce$.

\begin{figure}[t]
    \centering
\setlength{\tabcolsep}{4pt}
\setlength\extrarowheight{5pt}
\begin{tabular}{lcccccccccc}
$\mathcal{C}_1$&$a$&$b$&$c$&$d$&$e$&{\begin{tabular}[c]{@{}c@{}}$bef$\\$\textcolor{gray}{(f)}$\end{tabular}}& & &\\ \hline
$\mathcal{C}_2$& \begin{tabular}[c]{@{}c@{}}$abc$\\$\textcolor{gray}{(ab,bc)}$\end{tabular}& \begin{tabular}[c]{@{}c@{}}$bdef$\\$\textcolor{gray}{(bd,df)}$\end{tabular} & \multicolumn{2}{c}{\begin{tabular}[c]{@{}c@{}}$abcdef$\\$\textcolor{gray}{(af,cf,ce)}$\end{tabular}}& \begin{tabular}[c]{@{}c@{}}$ade$\\$\textcolor{gray}{(ae)}$\end{tabular} &$ac$&$de$&$ad$&$cd$\\ \hline
$\mathcal{C}_3$&$acd$&\\
\end{tabular}
   
    \caption{The closure structure for the dataset from Fig.~\ref{fig:example}. The passkeys that differ from their closures are highlighted in gray. The set-inclusion relations between itemsets are not shown}
    \label{fig:closure_structure}
\end{figure}

\subsection{Passkeys make an order ideal} \label{ssec:ideal}
In~\citep{stumme2002computing} it was shown that keys are downward closed, i.e., every proper subset of a key is a key. The latter means that given a certain key $X$ we may reconstruct $2^{|X|}-1$ other keys without computing any concept. In order theory the structure possessing this property is called an order ideal. 

\begin{definition}
Let $(\mathcal{P}, \leq)$ be a poset. A subset $\mathcal{I} \subseteq \mathcal{P}$ is called an \emph{order ideal} if $x \in \mathcal{I}$, $y \leq x $ then $y\in \mathcal{I}$.
\end{definition}

So, the set of all keys makes an order ideal in the set of all itemsets ordered by inclusion. The same holds for the set of all passkeys. 

\begin{proposition}
Let $\mathcal{I}_{P} = \bigcup_{k = 1, \ldots, |M|}\{D \in pKey(B) \mid B \in \mathcal{C}, |D| = k\}$ be the set of passkeys for the context $(G,M,I)$. Then $\mathcal{I}_{P}$ is an order ideal w.r.t. $\subseteq$. \label{prop:key_ideal}
\end{proposition}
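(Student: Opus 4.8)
The plan is to reduce the claim to the fact, recalled just above from \citep{stumme2002computing}, that the set of \emph{all} keys is an order ideal, and then to upgrade ``key'' to ``passkey'' by a size-minimality argument. Concretely, I would fix a passkey $D \in pKey(B)$ of some closed itemset $B \in \mathcal{C}$ and an arbitrary subset $E \subseteq D$, and show that $E \in pKey(E'')$, i.e. that $E$ is a minimum-size key of its own closure. Since keys are downward closed and $D$ is in particular a key, $E \subseteq D$ is already a key of $E''$, so the only thing left to check is the minimality of $|E|$ among the keys of $E''$. The cases $E = D$ and $E = \emptyset$ are immediate, so assume $\emptyset \neq E \subset D$.

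Suppose, for contradiction, that $E$ is not a passkey of $E''$: there is $F \in Key(E'')$ with $|F| < |E|$. The idea is to splice $F$ into $D$ in place of $E$; put $\widetilde D := F \cup (D \setminus E)$. I claim $\widetilde D'' = D'' = B$. Using the identity $(X_1 \cup X_2)' = X_1' \cap X_2'$ together with $F' = (E'')' = E'$ (which holds because $F \in Key(E'') \subseteq Equiv(E'')$ forces $F' = (E'')'$),
\[
\widetilde D' \;=\; F' \cap (D \setminus E)' \;=\; E' \cap (D \setminus E)' \;=\; \bigl( E \cup (D \setminus E) \bigr)' \;=\; D',
\]
where the last equality uses $D = E \cup (D \setminus E)$. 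Hence $\widetilde D'' = D'' = B$, so $\widetilde D$ generates $B$; and $|\widetilde D| \le |F| + |D \setminus E| < |E| + |D \setminus E| = |D|$, since the union $E \cup (D \setminus E)$ is disjoint. As every generator of a closed itemset contains a minimal generator of no larger size, $B$ would then possess a key of size $< |D| = Level(B)$, contradicting $D \in pKey(B)$. Therefore $E \in pKey(E'')$, and $\mathcal{I}_P$ is an order ideal.

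The only step that needs any care --- and the crux of the argument --- is the chain $\widetilde D' = D'$: it records that replacing the sub-block $E$ of $D$ by an \emph{equivalent} generator $F$, while leaving $D \setminus E$ untouched, does not change the closure, which is precisely the mechanism by which minimum cardinality propagates downward along $\subseteq$. Everything else is routine: the downward closure of keys is the cited result, the inequality $|\widetilde D| < |D|$ is arithmetic, and ``every generator contains a key of no larger size'' is the standard fact that in a finite poset every element dominates a minimal one.
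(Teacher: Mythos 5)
Your proof is correct and follows essentially the same route as the paper's: assume a proper subset of a passkey is not a passkey, splice the smaller equivalent generator back into the passkey, and derive a smaller generator of the original closed itemset, contradicting minimality. You are in fact more careful than the paper, which simply asserts that $Z \cup (X\setminus Y)$ "is a key of $X''$" without verifying that the closure is preserved (your computation via $(X_1\cup X_2)'=X_1'\cap X_2'$) or noting that the spliced set need only \emph{contain} a key of no larger size.
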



\begin{proof}
Consider a passkey $X$ and a subset $Y\subseteq X$. Suppose that $Y$ is not a passkey, then there exists a passkey $Z: |Z| < |Y|,$ $Z'' = Y''$. Hence, the set $Z \cup X\setminus Y$ is a key of $X''$,  which is smaller in size than $Y\cup X\setminus Y = X$. This contradicts the fact that $X$ is a passkey.
\end{proof}

This property is important since, given a passkey, we may conclude that all its subsets are passkeys. 
For example, knowing that $acd$ is a passkey (see Fig.~\ref{fig:closure_structure}), we may conclude that all its proper subsets, namely $ac$, $ad$, $cd$, $a$, $c$, $d$ are also passkeys.
 
\subsection{Assessing data complexity} \label{ssec:lower_bound_lattice_size}

The closure index $CI$ of a formal context allows us to estimate the number of elements in the corresponding concept lattice. The estimates are based on evaluating ``the most massive'' part of a concept lattice, which is usually related to a Boolean sublattice.

\begin{definition}
A contranominal-scale context is a context of the form $(M, M, \neq)$, i.e., the context where 
a pair $x,y\in M$ belongs to the relation $(x,y)\in \neq$ iff $x\neq y$. The matrix of $(M, M, \neq)$ is filled with crosses except for the main diagonal.
\end{definition}

A contranominal-scale context represents an ``extreme case'' of context, where the concept lattice has $2^{|S|}$ elements, i.e., the maximal size, and each equivalence class is trivial, i.e., composed of the closed itemset itself. An example is given in Fig.~\ref{fig:contranominal}.

\begin{figure}
    \centering
    \begin{minipage}[t]{.4\textwidth}
    \centering
        \vspace{-8em}
        \begin{tabular}{l|llll}\toprule
         &$m_1$&$m_2$&$m_3$&$m_4$\\\midrule
        $g_1$&  &\X&\X&\X\\
        $g_2$&\X&  &\X&\X\\
        $g_3$&\X&\X&  &\X\\
        $g_4$&\X&\X&\X& \\\bottomrule
        \end{tabular}
       
    \end{minipage}
	\begin{minipage}[t]{.5\textwidth}
	    \centering
	    \includegraphics[width = 0.8\textwidth]{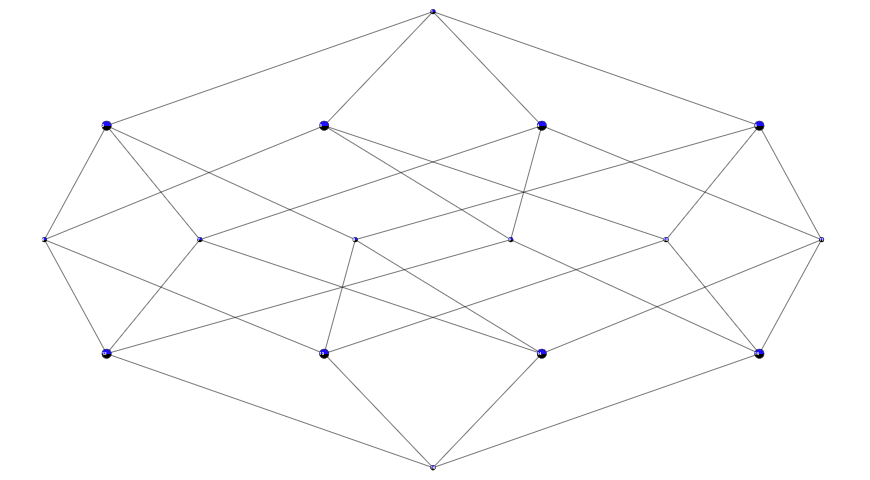}
	\end{minipage}
    \caption{A contranominal-scale context of size 4 and the corresponding concept lattice}
    \label{fig:contranominal}
\end{figure}

\begin{proposition}[lower bound on the lattice size]
The closure index $CI$ of $\mathbb{K} = (G,M,I)$ is a lower bound on the dimension of the inclusion-maximal Boolean sublattice of the concept lattice of $(G,M,I)$, so the tight lower bound on the number of closed itemsets is $2^{CI}$.
\end{proposition}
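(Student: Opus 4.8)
The plan is to turn the defining property of $CI$ into a concrete combinatorial gadget — a contranominal‑scale subcontext of size $CI$ — and to read a Boolean sublattice off it. First I would unfold Definition~\ref{def:CI}: since $\mathcal{C}_{CI}\neq\emptyset$, there is a closed itemset $B$ with a passkey $D\in pKey(B)$ of size $|D|=CI$; a passkey is in particular a key, so $D=\{d_1,\dots,d_{CI}\}$ is a minimal generator of $B=D''$. This is the only input I need from the closure structure.

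Next I would extract the scale. Because $D$ is a key, $d_i\notin(D\setminus\{d_i\})''$ for each $i$, so there is an object $g_i$ with $g_i\in(D\setminus\{d_i\})'$ and $(g_i,d_i)\notin I$, i.e. $g_i$ has every attribute of $D$ except $d_i$. Restricting $\mathbb{K}$ to the objects $\{g_1,\dots,g_{CI}\}$ and the attributes $D$ gives a subcontext whose incidence is exactly $\neq$, i.e. a copy of the contranominal scale of size $CI$. The crucial consequence is that this scale pins closures down on $D$: for $X\subseteq D$ and any $i$ with $d_i\notin X$, the object $g_i$ lies in $X'$ (it carries $D\setminus\{d_i\}\supseteq X$) while $(g_i,d_i)\notin I$, so $d_i\notin X''$; together with $X\subseteq X''$ this gives $X''\cap D=X$. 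Hence $X\mapsto X''$ is injective on $\mathcal{P}(D)$, the $2^{CI}$ closures $\{X''\mid X\subseteq D\}$ are pairwise distinct closed itemsets, and already $|\mathcal{C}|\ge 2^{CI}$.

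For the lattice‑theoretic statement I would show that $\varphi\colon X\mapsto (X',X'')$ is an order‑reversing order‑embedding of $(\mathcal{P}(D),\subseteq)$ into the concept lattice: it is clearly antitone, and $X''\cap D=X$ turns $X_1''\subseteq X_2''$ into $X_1\subseteq X_2$, so $\varphi$ also reflects the order; its image is therefore a sub‑poset isomorphic to the Boolean lattice of dimension $CI$. Closure under one operation is free — the concept‑meet $\varphi(X_1)\wedge\varphi(X_2)$ has intent $(X_1''\cup X_2'')''=(X_1\cup X_2)''$, which is again in the image — so the only thing left is closure under the concept‑join, i.e. the identity $X_1''\cap X_2''=(X_1\cap X_2)''$ for $X_1,X_2\subseteq D$. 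I expect this to be the real obstacle: the scale controls $X''$ only on the columns $D$ (that is exactly what $X''\cap D=X$ says), so one must rule out an attribute lying outside $(X_1\cap X_2)''$ but inside $X_1''\cap X_2''$, and this is precisely where the minimality of the generator $D$ — every single $d_i$ being dropped by some object — has to be exploited. It is also the only point at which one genuinely passes from the easy counting bound $2^{CI}$ (a Boolean sub‑order, hence $|\mathcal{C}|\ge 2^{CI}$) to a bona fide Boolean sublattice, matching the ``most massive part'' discussion preceding the statement.

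Finally I would verify tightness on the contranominal scale $(S,S,\neq)$ with $|S|=CI$ itself: there $X''=X$ for every $X\subseteq S$, so the concept lattice is the full Boolean lattice of dimension $CI$ with exactly $2^{CI}$ elements, every equivalence class is trivial, every set is its own passkey, and the closure index equals $CI$. Thus $2^{CI}$ is attained simultaneously as the dimension of the maximal Boolean sublattice and as the number of closed itemsets, which shows the bound cannot be improved.
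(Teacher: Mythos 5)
Your construction of the contranominal scale from a passkey of size $CI$, the identity $X''\cap D=X$, the resulting injectivity of $X\mapsto X''$ on $\mathcal{P}(D)$, and the tightness check on $(S,S,\neq)$ are all correct, and together they give a complete, self-contained proof of the counting claim $|\mathcal{C}|\ge 2^{CI}$. This is in fact more explicit than the paper's own proof, which obtains the contranominal-scale subcontext by citing Lemma~6 of \citet{albano2017concept} rather than extracting the objects $g_i$ by hand, and which passes from the scale to the Boolean sublattice without further argument.

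The genuine gap is exactly where you suspected it, and it cannot be closed along the route you propose: the identity $X_1''\cap X_2''=(X_1\cap X_2)''$ for $X_1,X_2\subseteq D$ is simply false, even when $D$ is a passkey. Take $G=\{g_1,g_2,g_3\}$, $M=\{d_1,d_2,m\}$ with $g_1'=\{d_2,m\}$, $g_2'=\{d_1,m\}$, $g_3'=\emptyset$. Then $D=\{d_1,d_2\}$ is a passkey of $M=D''$ (so $CI=2$), and $\{g_1,g_2\}\times D$ is a contranominal scale, yet $\{d_1\}''\cap\{d_2\}''=\{d_1,m\}\cap\{d_2,m\}=\{m\}$ while $(\{d_1\}\cap\{d_2\})''=\emptyset''=\emptyset$. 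So $\{X''\mid X\subseteq D\}$ is a Boolean suborder but not a sublattice: the concept join of $(\{g_2\},\{d_1,m\})$ and $(\{g_1\},\{d_2,m\})$ has intent $\{m\}$, which is not of the form $X''$ with $X\subseteq D$. A Boolean sublattice of dimension $CI$ does still exist (here with intents $\{m\},\{d_1,m\},\{d_2,m\},M$), but it must be built differently, not as the family of closures of subsets of $D$; this is the content of the theorem of \citet{albano2017concept} that the paper implicitly relies on. In short: your argument fully proves the bound $|\mathcal{C}|\ge 2^{CI}$ and its tightness, but the ``dimension of the inclusion-maximal Boolean sublattice'' part of the statement still requires either the citation or a different construction of the sublattice, and the minimality of $D$ alone will not rescue the join-closure identity.
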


\begin{proof}
In~\citep{albano2017concept} (Lemma 6) it was proven that for a context $\mathbb{K} = (G, M,I)$ and $A \subseteq G$ there exists a contranominal-scale subcontext of $\mathbb{K}$ having $A$ as its object set if and only if $A$ is a minimal generator (key).

Following the definition of $CI$, the maximal size of passkeys in $\mathbb{K}$ is equal to $CI$, thus there is no passkey of larger size. Since the maximal size of a key is at least $CI$, the dimension of the maximal Boolean sublattice is at least $CI$.

Hence, the number of all closed itemsets $\mathcal{C}$ of $(G,M,I)$ is at least $2^{CI}$.
\smartqed 
\end{proof}

In many situations, a binarization is used for mining patterns in numerical data.
We consider the \textit{simplest binarization} where each interval of an attribute is mapped to a new binary attribute, as it was explained in Section~\ref{sec:basics}.
In this case the closure index $CI$ cannot be greater than the number of attributes in the original dataset. Below, we consider this extreme case, where $CI$ is maximal.

\begin{proposition}[complexity of binarized datasets]
Let $\mathbb{K}_{num} = (G,S,W, I_{num})$ be a many-valued context and $\mathbb{K} = (G,M,I)$ be its simplest binarization.
If the closure index $CI$ of $\mathbb{K}$ is equal to the number of attributes $|S|$ in $\mathbb{K}_{num}$, then there exist $CI$ objects in $G$ such that their descriptions form a contranominal-scale subcontext of $\mathbb{K}$ and any pair of these descriptions in $\mathbb{K}_{num}$ differs in two attributes.
\label{prop:ci_binarized}
\end{proposition}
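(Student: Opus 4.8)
The plan is to bootstrap from the previous proposition (lower bound on the lattice size). Since $CI = |S|$, there is a closed itemset whose passkey $D$ has size $|S|$; being a key, $D$ is --- by \citep{albano2017concept} (Lemma 6) applied to the transposed context, contranominal-scale contexts being self-dual --- the attribute set of a contranominal-scale subcontext of $\mathbb{K}$ whose object set is some $A = \{g_1,\dots,g_{|S|}\} \subseteq G$ with $|A| = |D| = |S|$; writing $D = \{d_1,\dots,d_{|S|}\}$, object $g_i$ is incident to $d_k$ iff $i \neq k$. These are the $CI$ objects claimed, and the subcontext they induce on $D$ is contranominal by construction, so the remaining task is purely about $\mathbb{K}_{num}$: to show that any two of $g_1,\dots,g_{|S|}$ differ in exactly two attributes.

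First I would exploit the defining property of the simplest binarization: each object of $\mathbb{K}$ carries exactly $|S|$ binary attributes, one per attribute of $\mathbb{K}_{num}$. Since $g_i$ carries the $|S|-1$ attributes of $D \setminus \{d_i\}$ but not $d_i$, it carries exactly one further attribute $e_i$, so $\{g_i\}' = (D \setminus \{d_i\}) \cup \{e_i\}$ with $e_i \notin D$. Next I would introduce the map $\sigma\colon M \to S$ sending each binary attribute to the $\mathbb{K}_{num}$-attribute from which it is derived, and note that the $|S|$ attributes in $\{g_i\}'$ have pairwise distinct $\sigma$-images (again because $g_i$ has one binary attribute per numerical attribute), so $\sigma$ is injective on each $D \setminus \{d_i\}$. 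For $|S| \geq 3$ any two elements of $D$ lie in a common set $D \setminus \{d_i\}$, hence $\sigma|_D$ is a bijection onto $S$; consequently $\sigma(e_i) = \sigma(d_i) =: s_i$ --- the unique attribute not covered by $D \setminus \{d_i\}$ --- and the $e_i$, equivalently the $s_i$, are pairwise distinct.

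The conclusion is then a short comparison. For $i \neq j$, the sets $\{g_i\}'$ and $\{g_j\}'$ share exactly $\{d_k \mid k \notin \{i,j\}\}$, while $d_j, e_i \in \{g_i\}' \setminus \{g_j\}'$ and $d_i, e_j \in \{g_j\}' \setminus \{g_i\}'$. Pushing this through $\sigma$: $g_i$ and $g_j$ fall in the same interval on every attribute of $\mathbb{K}_{num}$ except $s_i$ (interval $e_i$ for $g_i$ versus interval $d_i$ for $g_j$) and $s_j$ (interval $d_j$ versus interval $e_j$), and both disagreements are genuine since $e_i \neq d_i$ and $e_j \neq d_j$. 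Hence any two of the chosen objects differ in precisely two attributes of $\mathbb{K}_{num}$, as claimed.

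The step I expect to be the real obstacle is the bijectivity of $\sigma|_D$: this is exactly where the hypothesis $CI = |S|$ does its work, forcing the contranominal scale to attain the maximal possible dimension, so that each object's ``extra'' binary attribute $e_i$ is uniquely pinned down and must sit over an attribute of $\mathbb{K}_{num}$ distinct from all the others; once this is established, everything else is bookkeeping on symmetric differences. The very small cases $|S| \leq 2$, where $\sigma|_D$ need not be injective, are routine and can be treated separately.
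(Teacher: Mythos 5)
Your proposal is correct and follows essentially the same route as the paper's proof: extract a contranominal-scale subcontext from a passkey of size $CI=|S|$ via Lemma~6 of \citep{albano2017concept}, then exploit the fact that under the simplest binarization each object carries exactly one binary attribute per numerical attribute (your map $\sigma$ is what the paper waves at with ``the bijection between the intervals of discretization and binary attributes''). Your write-up is in fact more careful than the paper's, which miscounts the matching values ($CI-2$ shared attributes inside $D$, not $CI-1$) and never justifies that $\sigma$ is injective on the passkey. One caveat: the case $|S|=2$ that you dismiss as routine is not merely routine --- the statement can actually fail there (e.g.\ two objects $(I_2,J_1)$ and $(I_1,J_1)$ over two numerical attributes give $CI=2=|S|$ yet differ in only one numerical attribute, because $\sigma$ need not be injective on $D$), so the proposition implicitly requires $|S|\geq 3$; this is a defect of the statement rather than of your argument for the main case.
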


\begin{proof}
Let us suppose that the number of original numerical attributes is equal to the closure index $CI$ of $\mathbb{K}$.
Thus, there exists a contranominal-scale context of size $CI$, where any pair of the corresponding objects have $CI-1$ matching binary values. Because of the bijection between the intervals of discretization and binary attributes, at least $CI$ object descriptions differ in a specific pair of values.
\smartqed 
\end{proof}

\noindent\textbf{Example}. Table~\ref{tab:many_valued_context} (right) represents a subcontext where any two object descriptions differ in two attribute values (up to intervals used for binarization). 

\subsection{Closeness under sampling} 

In many practical situations, datasets are large and a way to efficiently analyze such datasets is to use data sampling, i.e., building a model based on data fragments rather than on the whole dataset.
There is a lot of approaches to sample data, depending on the task, data specificity and the selected learning models.

In this section we consider a particular type of sampling that consists in selecting a certain amount of objects keeping the complete attribute space, i.e., we do not sample the attributes.

Let $\mathbb{K} = (G,M, I)$ be a formal context. Then an \textit{object sampling} in the context $\mathbb{K}$ is a context $\mathbb{K}_s = (G_s, M, I_s)$, where $G_s\subseteq G$, and $I_s = \{ g I m \mid g \in G_s\}$.

\begin{proposition} 
Itemsets closed in $\mathbb{K}_s$ are closed in $\mathbb{K}$.
\end{proposition}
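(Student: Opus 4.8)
The plan is to show that if $B$ is an intent of $\mathbb{K}_s = (G_s, M, I_s)$, then $B$ is also an intent of $\mathbb{K} = (G, M, I)$; equivalently, $B$ is closed in $\mathbb{K}$. Throughout I will write $(\cdot)'$ for the derivation operators in $\mathbb{K}$ and $(\cdot)^{s}$ for those in $\mathbb{K}_s$. The first observation I would record is the monotone relation between the two closure operators: since $G_s \subseteq G$ and $I_s$ is the restriction of $I$ to $G_s \times M$, for any attribute set $B \subseteq M$ we have $B^{s} = B' \cap G_s \subseteq B'$. Applying derivation once more (and using antitonicity of $(\cdot)'$), this gives $B'' \subseteq B^{s s}$ for every $B \subseteq M$. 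So the closure operator in $\mathbb{K}$ is pointwise contained in the closure operator in $\mathbb{K}_s$ on the attribute side.

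Now suppose $B$ is closed in $\mathbb{K}_s$, i.e.\ $B = B^{ss}$. We want $B = B''$. We always have $B \subseteq B''$ (extensivity of the closure operator in $\mathbb{K}$). For the reverse inclusion, combine $B'' \subseteq B^{ss}$ with $B^{ss} = B$ to get $B'' \subseteq B$. Hence $B = B''$, so $B$ is an intent of $\mathbb{K}$, which is exactly the claim. This is essentially all there is to it; the argument is a two-line chain of inclusions once the key inequality $B' \cap G_s \subseteq B'$ and the resulting $B'' \subseteq B^{ss}$ are in place.

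There is no real obstacle here — the statement is a soft consequence of the fact that restricting the object set only shrinks every attribute-extent and therefore can only make the induced attribute-closure operator larger, so its fixed points form a superset of those of the original. The only point that deserves a sentence of care is making sure the direction of the monotonicity is the one claimed: it is the attribute-side closed sets (intents / closed itemsets) that are preserved when we shrink $G$, whereas extents are generally \emph{not} preserved. One could also phrase the proof more concretely: if $B = B^{ss}$ but $B \subsetneq B''$, pick $m \in B'' \setminus B$; then every object of $G_s$ that has all attributes of $B$ also has $m$ (because $B'' = (B')'$ and $B' \supseteq B'\cap G_s = B^s$ forces $m$ to be common to $B^s$ as well, since $m \in B''$ means $m$ is shared by all of $B' \supseteq B^s$), contradicting $m \notin B^{ss} = B$. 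Either formulation works; I would present the inclusion-chain version as it is shortest and cleanest.
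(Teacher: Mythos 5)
Your proof is correct, and it rests on exactly the same fact as the paper's: the extent of $B$ in $\mathbb{K}_s$ is the restriction $B'\cap G_s$ of its extent in $\mathbb{K}$, so any object of $G_s$ witnessing non-closedness transfers to $\mathbb{K}$ (equivalently, $B''\subseteq B^{ss}$). The paper packages this as a proof by contradiction via equivalence classes rather than your direct inclusion chain, but your alternative ``concrete'' phrasing at the end is essentially the paper's argument, so the two proofs coincide in substance.
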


\begin{proof}
Let $\mathbb{K} = (G,M,I)$ be a formal context, $\mathbb{K}_s = (G_s, M, I_s)$ be a formal subcontext, $(A_s,B_s)$ be a formal concept of $\mathbb{K}_s$.
We denote the derivation operators $(\cdot)'$ in context $\mathbb{K}$ by indexing them with the context name, i.e., $(\cdot)'_\mathbb{K}$.

Suppose that $B_s$ is not closed in $\mathbb{K}$, i.e., there exists a concept $(A,B)$ in $\mathbb{K}$ such that $B_s \subset B$ and $B_s \in Equiv(B)$ in $\mathbb{K}$. The later means that they have the same extent in $\mathbb{K}$, i.e., $(B)'_{\mathbb{K}} = (B_s)'_{\mathbb{K}}$. Since $B_s$ is closed in  $\mathbb{K}_s$ it follows directly that $B\not\in Equiv(B_s)$, and from $B_s \subset B$ it follows that $(B)'_{\mathbb{K}_s} \subset (B_s)'_{\mathbb{K}_s}$. Hence, $\exists g \in (B_s)'_{\mathbb{K}_s} \subseteq (B_s)'_{\mathbb{K}}$ such that $g \not\in (B)'_{\mathbb{K}}$. This contradicts the fact that $B_s \in Equiv(B)$, thus our assumption is wrong and $B_s$ is closed in $\mathbb{K}$.
\smartqed 
\end{proof}

\begin{proposition} 
For any closed itemset $B$, $Level(B)_{\mathbb{K}_s} \leq Level(B)_{\mathbb{K}}$, i.e., the passkey size of $B$ for a sampled context is not greater than the passkey size of $B$ for the whole context. 
\end{proposition}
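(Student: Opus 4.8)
The plan is to show that any passkey of $B$ computed in the full context $\mathbb{K}$ is still a \emph{generator} of $B$ in the sampled context $\mathbb{K}_s$ — not necessarily a minimal one. Since $Level(B)_{\mathbb{K}_s}$, being the size of a minimum key of $B$ in $\mathbb{K}_s$, is bounded above by the size of any generator of $B$ in $\mathbb{K}_s$, this immediately yields the claimed inequality.

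First I would observe that $B$ is closed in both contexts, so that $Level(B)$ is defined in each: $B$ is closed in $\mathbb{K}_s$ by hypothesis, hence closed in $\mathbb{K}$ by the previous proposition. Then I would fix a passkey $X \in pKey(B)$ in $\mathbb{K}$, so that $|X| = Level(B)_{\mathbb{K}}$, $X \subseteq B$, and $X$ and $B$ have the same extent in $\mathbb{K}$, i.e. $(X)'_{\mathbb{K}} = (B)'_{\mathbb{K}}$.

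The key step is the elementary identity, valid for every $Y \subseteq M$,
\[
(Y)'_{\mathbb{K}_s} \;=\; (Y)'_{\mathbb{K}} \cap G_s ,
\]
which follows directly from the definition $I_s = \{\, gIm \mid g \in G_s \,\}$. Applying this to $X$ and to $B$ and using $(X)'_{\mathbb{K}} = (B)'_{\mathbb{K}}$ gives $(X)'_{\mathbb{K}_s} = (B)'_{\mathbb{K}_s}$: the itemsets $X$ and $B$ share the same extent in $\mathbb{K}_s$ as well. Since $B$ is closed in $\mathbb{K}_s$, this forces $(X)''_{\mathbb{K}_s} = (B)''_{\mathbb{K}_s} = B$, so $X$ generates $B$ in $\mathbb{K}_s$.

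To conclude, every generator of $B$ in $\mathbb{K}_s$ contains a key of $B$ in $\mathbb{K}_s$ of no larger size, and a passkey of $B$ in $\mathbb{K}_s$ is a key of minimum size; hence $Level(B)_{\mathbb{K}_s} \leq |X| = Level(B)_{\mathbb{K}}$. There is no genuine obstacle in this argument; the only points demanding care are bookkeeping — consistently indexing each derivation operator by its context, as in the proof of the preceding proposition — and resisting the temptation to claim that $X$ remains a \emph{key} in $\mathbb{K}_s$: it need only remain a generator, which is all the bound requires, since the passkey size can only drop when some new, shorter minimal generator appears after removing objects.
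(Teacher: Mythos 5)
Your proof is correct and rests on the same key fact as the paper's own argument, namely that extents in the sampled context are restrictions of extents in the full context, $(Y)'_{\mathbb{K}_s} = (Y)'_{\mathbb{K}} \cap G_s$, so that a generator of $B$ in $\mathbb{K}$ remains a generator of $B$ in $\mathbb{K}_s$. You run this directly (a passkey of $\mathbb{K}$ stays a generator in $\mathbb{K}_s$, hence bounds the minimum key size there from above), while the paper phrases the very same observation as a proof by contradiction; the mathematical content is the same, and your direct formulation is, if anything, the cleaner of the two.
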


\begin{proof}

Suppose the opposite, i.e., that $Level(B)_{\mathbb{K}_s} > Level(B)_{\mathbb{K}}$. Thus, there exists a passkey $X \in pKey(B)_{\mathbb{K}_s}$ and a passkey $Y \in pKey(B)_{\mathbb{K}}$ such that $|X| > |Y|$. The latter means that $(Y)''_{\mathbb{K}} = B$. However, from Proposition~\ref{prop:key_ideal} and the definition of the passkey, for any $Y \subset B$, such that $ |X| > |Y|$, $Y \not\in Equiv(B)_{\mathbb{K}_s}$, thus, $(B)'_{\mathbb{K}_s} \subset (Y)'_{\mathbb{K}_s}$. It means that in the augmented context $\mathbb{K}$, $(B)'_{\mathbb{K}} \not\supseteq (Y)'_{\mathbb{K}}$, i.e., the extent of $Y$ cannot be equal or included into extent of $B$, thus, $Y \not\in Equiv(B)_{\mathbb{K}}$. Thus, there is no a passkey $Y \in pKey(B)_{\mathbb{K}}$, such that $|Y| < |X|$. Hence, our assumption is wrong and $Level(B)_{\mathbb{K}_s} \leq Level(B)_{\mathbb{K}}$.
\smartqed 
\end{proof}

To illustrate that the closure level $Level(B)_{\mathbb{K}_s}$ (see Section~\ref{ssec:closure_structure}) of a concept $(A,B)$ in a sampled context $\mathbb{K}_s$ is a lower bound on the closure level $Level(B)_{\mathbb{K}}$ of the concept in the larger context $\mathbb{K}$, we consider an example from Fig.~\ref{fig:example}. The closed itemset $acd$ belongs to the 3rd level. Fig.~\ref{fig:sampled_structures} shows the closure structures for sampled context with objects $G_s = \{g_1, g_2, g_3, g_4\}$ (left) and $G_s = \{g_1, g_2, g_3, g_5\}$ (right). Closed itemset $acd$ belongs to the 3rd and 2nd levels, respectively. Hence, a sampled context may be used to compute closed itemsets, for any closed itemset $B$, $Level(B)_{\mathbb{K}_s} \leq Level(B)_{\mathbb{K}}$.

\begin{figure}
    \centering
    \begin{minipage}[t]{.45\textwidth}
        \centering
\setlength{\tabcolsep}{4pt}
\setlength\extrarowheight{5pt}
\begin{tabular}{lcccccc}
    $\mathcal{C}_1$&$a$&\begin{tabular}[c]{@{}c@{}}$abc$\\$\textcolor{gray}{(b)}$\end{tabular}&$c$&$d$&\begin{tabular}[c]{@{}c@{}}$ade$\\$\textcolor{gray}{(e)}$\end{tabular}&$f$\\ \hline
    $\mathcal{C}_2$& \multicolumn{2}{c}{$ac$} & \multicolumn{2}{c}{$ad$} & \multicolumn{2}{c}{$cd$} \\\hline 
    $\mathcal{C}_3$& \multicolumn{6}{c}{$acd$}
\end{tabular} 
    \end{minipage}
    \begin{minipage}[t]{.45\textwidth}
        \centering
\setlength{\tabcolsep}{4pt}
\setlength\extrarowheight{5pt}
\begin{tabular}{lcccccc}
    $\mathcal{C}_1$&$a$&$b$&\begin{tabular}[c]{@{}c@{}}$ac$\\\textcolor{gray}{$(c)$}\end{tabular}&$d$&\begin{tabular}[c]{@{}c@{}}$de$\\\textcolor{gray}{$(e)$}\end{tabular}&\begin{tabular}[c]{@{}c@{}}bdef\\$\textcolor{gray}{(f)}$\end{tabular}\\ \hline 
    $\mathcal{C}_2$&$ad$&\begin{tabular}[c]{@{}c@{}}$ade$\\$\textcolor{gray}{ae}$\end{tabular}&\begin{tabular}[c]{@{}c@{}}$acd$\\$\textcolor{gray}{cd}$\end{tabular}&\begin{tabular}[c]{@{}c@{}}$abc$\\$\textcolor{gray}{(ab,bc)}$\end{tabular} & \multicolumn{2}{c}{\begin{tabular}[c]{@{}c@{}}$abcdef$ \\$\textcolor{gray}{(af,ce,cf,bf)}$\end{tabular}}
\end{tabular} 
    \end{minipage}
    \caption{The closure structures computed on sampled contexts $\mathbb{K} = (G_s, M, I)$ with the sets  $G_s = \{g_1, g_2, g_3, g_4\}$ (left) and $G_s = \{g_1, g_2, g_3, g_5\}$ (right). The passkeys that differ from their closure are highlighted in gray}
    \label{fig:sampled_structures}
\end{figure}

\section{The GDPM algorithm}\label{sec:algorithm}

In this section we introduce the GDPM algorithm --for Gradual Discovery in Pattern Mining-- which computes the closure structure.
We analyze the complexity of GDPM and then consider computational issues related to key-based structures.
Finally we discuss how the closure structure may reduce the computational complexity of a mining problem.

\subsection{Computing the closure structure with GDPM}\label{ssec:gdpm}

The GDPM algorithm enumerates closed itemsets w.r.t. the closure levels based on lexicographically smallest passkeys. At each iteration GDPM computes the next closure level $\mathcal{C}_k$ using the itemsets from the current closure level $\mathcal{C}_{k-1}$. The algorithm can be stopped at any time. It returns the computed closure levels and, optionally, one passkey per closed itemset. GDPM requires that all attributes from $M$ be linearly ordered, i.e., $m_1 < m_2<\ldots < m_k$. Given this order, we say that an itemset $X_1 \subseteq M$ is \textit{lexicographically smaller} than an itemset $X_2\subseteq M$, if the smallest differing element belongs to $X_1$. For example, $abf$ is lexicographically smaller than $ac$ because $b<c$.

One of the main efficiency concerns in enumeration algorithms is to ensure the uniqueness of enumerated itemsets. To solve this issue, GDPM keeps all generated closed itemsets in a lexicographic tree to ensure that each closed itemset is generated only once. A \textit{lexicographic tree}~\citep{zaks1980lexicographic} --\textit{trie} or \textit{prefix tree}-- is a tree that has the following properties:  (i) the root corresponds to the null itemset; (ii) a node in the tree corresponds to an itemset, (iii) the parent of a node $i_1 \ldots i_{\ell-1} i_\ell$ is the node $i_1 \ldots i_{\ell-1}$.


The pseudocode of GDPM is given in Algorithm~\ref{alg:gdpm}.
GDPM is a breadth-first search algorithm, which requires all attributes from $M$ to be linearly ordered.
The closure structure is computed sequentially level by level, level $\mathcal{C}_k$ is computed by adding one by one attributes to each closed itemset from $\mathcal{C}_{k-1}$.
A trie $\mathcal{T}_{k}$ is used to perform the duplicate test.
In the beginning, a trie consists of the root labeled by the empty itemset.
Then, at each iteration (line~\ref{alg:gdpm_main_begin}-\ref{alg:gdpm_main_end}) the trie grows by including the closed itemsets from the closure level $\mathcal{C}_k$ that is currently being computed.
The algorithm can be halted at any level.

\begin{algorithm}
    \caption{GDPM (main loop)}
    \label{alg:gdpm}
    \begin{algorithmic}[1]
    \REQUIRE{$\mathbb{K} = (G,M,I)$, a formal context, $k_{max}$ the maximal closure level to compute}
    \ENSURE{$\mathcal{C}_{k}$, $k = 0,\ldots, k_{max}$, closed itemsets of the levels up to $k_{max}$}
        \STATE $k \leftarrow 0$ \;
        \STATE $\mathcal{C}_k \leftarrow \{ \emptyset\}$ \;
		\STATE $\mathcal{T}_{k} \leftarrow \emptyset$ \;
        \REPEAT{ \label{alg:gdpm_main_begin}
            \STATE $k \leftarrow k + 1$ \; 
            \STATE $\mathcal{C}_k\leftarrow GDPM\mbox{-}INT(\mathcal{C}_{k-1}, \mathcal{T}_{k - 1}, \mathbb{K})$ \hspace{.2em}  // Algorithm~\ref{alg:gdpm_int}
        }\UNTIL{$|\mathcal{C}_k|>0$ and $k < k_{max}$}\label{alg:gdpm_main_end}
		\RETURN $\mathcal{C}_{k}$
	\end{algorithmic}
\end{algorithm}

The function GDPM-INT for computing level $\mathcal{C}_k$ using itemsets from level $\mathcal{C}_{k-1}$ is described in Algorithm~\ref{alg:gdpm_int}. The candidates for the closure level $\mathcal{C}_k$ are generated using a closed itemset $B$ of the current closure level $\mathcal{C}_{k-1}$ and item $m$ that is not included into $B$. For a candidate $B \cup \{m\}$ GDPM computes its closure (line~\ref{line:closure}), and then checks if this closed itemset is included in the trie (line~\ref{line:tree_check}). The closed itemsets that are not in the trie are added both to the trie $\mathcal{T}_k$ and the closure level $\mathcal{C}_k$ (line~\ref{line:add_to_tree} and~\ref{line:add_to_cl}, respectively).

\begin{algorithm}[H]
    \caption{GDPM-INT\hspace{3.em} // GDPM-EXT}
    \label{alg:gdpm_int}
    \begin{algorithmic}[1]
        \REQUIRE $\mathcal{C}_{k-1}$, closed itemsets of the level $k-1$,\\
        $\mathcal{T}_{k-1}$, the trie containing all closed itemsets $\bigcup_{i = 1, \ldots, k-1}\{B \mid B \in \mathcal{C}_i\}$ 
        \ENSURE $\mathcal{C}_{k}$, closed itemsets of the level $k$
        \STATE  $\mathcal{C}_{k} \leftarrow \emptyset $
        \STATE  $\mathcal{T}_{k} \leftarrow \mathcal{T}_{k-1}$
        \FORALL{$B \in \mathcal{C}_{k-1}$}
            \FORALL{$m \in M \setminus B$}\label{line:candidates} 
                \STATE $B_c \leftarrow (B \cup \{m\})''$\label{line:closure}  \hspace{4.2em}  // $A_c \leftarrow (B \cup \{m\})'$
                    \IF{$B_c \notin \mathcal{T}_{k} \textbf{ then  }  \hspace{5em} //\textbf{  if }A_c \notin \mathcal{T}_{k}$}\label{line:tree_check} 
                        \STATE $add(\mathcal{T}_{k},B_c)$\label{line:add_to_tree}  \hspace{5.5em}   // \hspace{1.5em} $add(\mathcal{T}_{k},A_c)$
                        \STATE $add(\mathcal{C}_{k},B_c)$\label{line:add_to_cl}  \hspace{5.4em}   // \hspace{1.5em}  $add(\mathcal{C}_{k},A_c')$
                    \ENDIF	
            \ENDFOR
        \ENDFOR
        \RETURN $\mathcal{C}_{k}$  
    \end{algorithmic}
    \end{algorithm}

\begin{figure}
    \centering
    \input{gdpm_int_steps}
    \caption{An iteration in the GDPM-INT algorithm for computing itemsets of the 2nd level using closed itemset $b$. The duplicate test (line 6) is performed after computing both the extent and intent (line 5). Elements which are newly added to $\mathcal{T}_2$ are highlighted in red}
    \label{fig:gdpm_me_execution}

    \centering
    \input{gdpm_ext_steps}
    \caption{An iteration of the GDPM-EXT algorithm for computing itemsets of the 2nd level using closed itemset $b$. The duplicate test (line 6) is performed after computing an extent (line 5). The intent is computed only if the extent has not been yet created (line 8). An altered element in $\mathcal{T}_2$ is highlighted in red}
    \label{fig:gdpm_f_execution}
\end{figure}

\noindent\textbf{Example.}
Let us consider some intermediate steps of GDPM.
We use the dataset from Fig.~\ref{fig:example} and consider steps when a closed itemset $b \in \mathcal{C}_1$ is used to compute closed itemsets from $\mathcal{C}_2$.
The result of the 1st call of GDPM-INT is $\mathcal{C}_1 = \{a,b,c,d,e,bef\}$.
In the 2nd call, after considering the candidates $\{a, m\}$ where $m \in \{b,c,d,e,f\}$, five new itemsets are added to the trie, namely $abc$, $abcdef$, $ac$, $ad$, $ade$.
Then, the algorithm proceeds to compute the closed itemsets based on candidates $\{b,m\}$, $m \in \{c,d,e,f\}$.
The execution tree of GDPM-INT and the trie are given in Fig.~\ref{fig:gdpm_me_execution}.
We can see that for an itemset $\{b,m\}$, $m \in \{c,d,e,f\}$ GDPM-INT computes the closure $\{b,m\}''$ in line 5, and only after that checks the existence of $\{b,m\}''$ in the trie $\mathcal{T}_2$.
The duplicate test $B_c\not\in \mathcal{T}_2$ is performed in $O(|M|)$ steps.

Using the execution tree, we can also reconstruct the lexicographically smallest passkey. For example, itemset $bdef$ was generated in the branches labeled by ``$b$'' (to get a closed itemset $b$) and then ``$d$''. Thus, the lexicographically smallest passkey of $bdef$ is $bd$.


\subsection{The extent-based version of GDPM}\label{ssec:gdpm_ext}

In the previous section we considered the GDPM algorithm and the sub-algorithm GDPM-INT for computing a single closure level. 
To perform the duplicate test, GDPM-INT uses a trie based on closed itemsets or intents.
Below we introduce an alternative version ``GDPM-EXT'', where, instead of intents, the trie contains the extents of closed itemsets.

The difference is that instead of computing the closure $B_c = (B \cup \{m\})''$, GDPM-EXT  computes only the extent $A_c = (B \cup \{m\})'$ (line~\ref{line:closure}). Then GDPM checks if $A_c$ exists in the trie $\mathcal{T}_k$ (line~\ref{line:tree_check}) and inserts $A_c$ into $\mathcal{T}_k$ if it is not found (line~\ref{line:add_to_tree}). For the inserted extent $A_c$, the corresponding closed itemset $A_c'$ is added to the closure level $\mathcal{C}_k$ (line~\ref{line:add_to_cl}).

\noindent\textbf{Example.} Let us consider how GDPM-EXT works by means of a small example. We consider same steps as for GDPM-INT, i.e., when a closed itemset $b \in \mathcal{C}_1$ is used to compute closed itemsets from $\mathcal{C}_2$. The execution tree of GDPM-EXT and the trie are given in Fig.~\ref{fig:gdpm_f_execution}. Both GDPM-INT and GDPM-EXT generate closed itemsets in the same order, i.e., the result of the 1st call for both of them is $\mathcal{C}_1 = \{a,b,c,d,e,bef\}$. In the 2nd call after considering candidates $\{a, m\}$, $m \in \{b,c,d,e,f\}$ the following new extents are added to the trie: $1$, $\emptyset$, $12$, $23$, $3$. The corresponding closed itemsets, added to $\mathcal{C}_k$, are the same as for GDPM-INT, i.e., $abc$, $abcdef$, $ac$, $ad$, $ade$. Then, GDPM-EXT proceeds to compute the closed itemsets based on $\{b,m\}$, $m \in \{c,d,e,f\}$. The execution tree in Fig.~\ref{fig:gdpm_f_execution} shows that for an itemset $\{b,m\}$, $m \in \{c,d,e,f\}$ GDPM-EXT computes only $\{b,m\}'$ and then checks its existence in the trie $\mathcal{T}_2$. Comparing the execution trees in Fig.~\ref{fig:gdpm_me_execution} and~\ref{fig:gdpm_f_execution}, we can see that the branches  ``$c$'', ``$e$'', ``$f$'' of GDMP-EXT are shorter than the branches of GDPM-INT, thus GDPM-EXT may work faster than GDMP-INT. However, the duplicate test $X\in \mathcal{T}_2$ is performed in $O(|M|)$ and $O(|G|)$ steps for GDPM-INT and GDPM-EXT, respectively. Thus, in case where $|M|<< |G|$ GDPM-INT may require smaller space than GDPM-EXT. Let us consider time and space complexity of GDMP in detail.

\subsection{Complexity of GDPM}\label{ssec:complexity_gdpm}
\paragraph{The estimates of the size of the next output.}
Given the input $\mathcal{C}_k$ (which is the output of the previous iteration), Algorithm~\ref{alg:gdpm} generates $|\mathcal{C}_{k + 1}|$ concepts at the next iteration, the maximal number of the concepts is  $\sum_{(A,B) \in \mathcal{C}_{k}} |M \setminus B|$. More generally, the number of concepts (or closed itemsets) at level $k + n$, $n \in \mathbb{N}$, is $O( |\mathcal{C}_{k}||M|^{n})$.

\paragraph{The worst-case space complexity of GDPM.} At each iteration Algorithm~\ref{alg:gdpm} stores the output of the previous iteration $\mathcal{C}_k$, the current output $\mathcal{C}_{k + 1}$, and the trie with the closed itemsets/extents for GDPM-INT/EXT, respectively. Thus, at the $(k+1)$-th iteration GDPM requires $O\left((|\mathcal{C}_{k}| + |\mathcal{C}_{k+1}|) \cdot (|G| + |M|)\right)$ space to store the closed itemsets and $O(\sum_{i = 1}^{k + 1}|\mathcal{C}_i|\cdot |M|)$ or $O(\sum_{i = 1}^{k + 1}|\mathcal{C}_i|\cdot |G|)$ space to store the trie for GDPM-INT and GDPM-EXT, respectively. If the size of closed itemsets is smaller than the sizes of their extents, GDPM-INT may require less space than GDPM-EXT.

\paragraph{The worst-case time complexity of GDPM.}
To compute concepts $\mathcal{C}_{k+1}$ using $\mathcal{C}_k$ as output, one needs to check $O(|\mathcal{C}_{k}|\cdot|M|)$ candidates. For each candidate GDPM-INT computes both the extent and the intent, which takes $O(|G||M|)$. Then, the duplicate test takes $O(|M|)$. GDPM-EXT computes for each candidate only the extent, that takes $O(|G|)$ time, and then checks for duplicates in $O(|G|)$. Only for the candidates that pass the duplicate test it computes the intent taking $O(|G||M|)$ time. Despite the fact that dublicates are identified faster by GDPM-EXT, asymptotically computing a closed itemset by both algorithms takes $O(|G||M|)$.
Hence, the total worst-case complexity of the $k+1$-th iteration of GDPM-INT and GDPM-EXT is $O(|\mathcal{C}_k|\cdot |G|\cdot  |M|^2)$. 

\subsection{Related approaches to key-based enumeration}\label{ssec:computing_cs_keys_min_keys}

In this section we consider related approaches to enumerating closed itemsets based on keys and the difference between them and GDPM.

As it was discussed in Section~\ref{sec:related_work}, one of the main efficiency concerns in enumeration algorithms is to ensure the uniqueness of enumerated closed itemsets (concepts). The canonicity test is known to be an efficient solution to this problem~\citep{kuznetsov2002comparing}. For performing the canonicity test the attributes from $M$ are required to be linearly ordered (as for GDPM).

The canonicity test consists in checking if a newly computed closure of a generator is lexicographically greater than the generator itself. The canonicity test is applicable in depth-first search algorithms where itemsets are extended at each iteration by adding an item, thus the closed itemsets are enumerated in the ``lexicographically smallest branch'' of the execution tree. Unfortunately, we cannot use this approach when we generate closed itemsets based on keys and passkeys. Let us consider an example from Fig.~\ref{fig:example} to illustrate this problem. The closed itemset $ade$ has one key (which is a passkey) $ae$. Using the passkey $ae$, we get the closure $ade$ that is lexicographically smaller than the passkey itself, thus, the canonicity test is not passed. Hence, no key and passkey can be used to generate $ade$ relying on the canonisity test.

The keys (including passkeys) are computed by a breadth-first search algorithm in order to ensure enumeration of the keys before other generators. To the best of our knowledge, the first algorithm for computing closed itemsets by enumerating the keys was proposed in~\citep{pasquier1999discovering} and called A-Close. An alternative principle of enumeration was proposed in the Pascal algorithm~\citep{bastide2000mining} and improved in the Titanic algorithm~\citep{stumme2002computing}. These three algorithms rely on the fact that keys make an order ideal, i.e., each subset of a key is a key (see Proposition~\ref{prop:key_ideal}). 

In~\citep{szathmary2009efficient} a depth-first search algorithm, called Talky-G, for enumerating keys was proposed. This algorithm traverses the search space in so-called ``reverse pre-order'' and checks if a subset is a key by means of the hash table containing all enumerated previously keys. 

The main challenge related to a \emph{correct} enumeration of keys is to avoid enumeration of duplicates and non-keys. We call an enumeration algorithm \emph{local} if it can eliminate a non-key itemset $X$ by considering its proper subsets $X \setminus \{m\}$, $m \in X$. The \emph{global} enumerating algorithm is defined dually. 

The main advantage of the local enumeration is lower space requirements, since one needs to store only a subset of  previously enumerated itemsets. Actually, to check if an itemset $X$ is a key, one needs to check whether all its subsets of size $|X|-1$ are keys.

The key can be enumerated locally using a breadth-first search approach. In this case, the keys of size $k$ are generated at the $k$-th iteration. To check if a candidate is a key one needs to store only the output of the previous iteration, i.e., the keys of size $k-1$. The depth-first approaches may work faster, but they require storage of all enumerated keys.

Let us consider the principles of key enumeration by a breadth-first algorithm called Titanic. The key enumeration is performed level by level and includes the following steps:

\begin{enumerate}
    \item Candidate generation. The set of candidates $\mathcal{S}$ is generated by merging all keys from the previous level $\mathcal{K}_{k-1}$ that differ by one item: $\mathcal{S}  =\{\{m_1 < \ldots < m_k \} \mid $
    				$  \{m_1, \ldots,  m_{k - 2},  m_{k- 1} \}, $ $\{m_1, \ldots,  m_{k - 2},  m_{k} \} \in \mathcal{K}_{k-1}\}$, where $\mathcal{K}_{k-1}$ is the set of all keys of size $k-1$.
    \item Verification of the ideal property, i.e., for each candidate one needs to check whether each of the $k$ proper subsets of size $(k-1)$ is included into $\mathcal{K}_{k-1}$.
    \item Support-closeness verification, i.e., comparison of the actual support with the minimal support of keys from $\mathcal{K}_{k-1}$ contained in the current candidate.
\end{enumerate}

The verification of keys (Step 2 and 3) is performed by considering $k$ immediate subsets. The size of the largest possible level $\mathcal{K}_{|M|/2}$ is $O\left(\frac{2^{|M|+1/2}}{\sqrt{|M|\pi}}\right)$ (this follows directly from the Stirling's formula). Thus, the worst-case time complexity of the first step (consisting in considering all pairs of keys from $\mathcal{K}_{k-1}$ and comparing their elements) is $O\left(\frac{2^{2|M| + 1}}{|M|\pi }|M|\right)$. However, this step can be optimized by storing all keys in a trie and merging each itemset corresponding to a key with its siblings that are lexicographically greater than the itemset. Completeness of this enumeration follows directly from the definition of the order ideal, i.e., any key of size $k$ contains two keys of size $k-1$ that differ only by the last elements. Thus, the worst-case time complexity of the optimized first step is $O\left(\frac{2^{|M|+1/2}}{\sqrt{|M|\pi}}|M|^2\right)$, because the multiplier $\frac{2^{|M|+1/2}}{\sqrt{|M|\pi}}$ is replaced by the maximal number of siblings $|M|$.
This solution is also related to the problem of a ``single parent'' discussed in~\citep{arimura2009polynomial}, where the authors proposed a depth-first search algorithm, called  CloGenDFS, that traverses a search subspace (called a family tree) in such a way that each closed itemset has only one parent and thus is generated only one time. In the case of closed itemsets, it ensures the absence of duplicates. However, in the case of keys, using a breadth-first search traversal we may only expect a better execution time because each $k$-sized itemset will be generated only once, but to eliminate non-key elements, one should perform Steps 2 and 3.

Despite the locality of the Titanic algorithm, the worst-case space complexity is not polynomial and proportional to the size of the largest level, i.e., $O\left(\frac{2^{|M|+1/2}}{\sqrt{\pi|M|}}\right)$.

Even if the splitting induced by the sizes of keys is very similar to the closure structure (e.g., compare Fig.~\ref{fig:key_based_splitting} and~\ref{fig:closure_structure}), there is a principal difference in the strategies for ensuring the correct enumeration of keys and passkeys. The elements of the closure structure cannot be enumerated locally, since ``minimality in size'' cannot be verified by checking only immediate subsets. 

\begin{figure}
    \centering
    \begin{minipage}{.25\textwidth}
        \begin{tabular}{l|llll}
            \toprule
            $g_1$&$a$&$b$&$c$&$d$\\
            $g_2$& &$b$&$c$&  \\
            $g_3$&$a$&  &$c$&  \\
            $g_4$&$a$&$b$&  & \\
            \bottomrule
        \end{tabular}
    \end{minipage}
    \begin{minipage}{.5\textwidth}
        \centering
\setlength{\tabcolsep}{4pt}
\setlength\extrarowheight{5pt}
\begin{tabular}{lcccc}
    $\mathcal{C}_1$&$a$&$b$&$c$& \begin{tabular}[c]{@{}c@{}}$abcd$\\$\textcolor{gray}{(d)}$\end{tabular} \\ \hline 
    $\mathcal{C}_2$&$ab$&$bc$&$ac$&  \\  \hline 
    $\mathcal{C}_3$& \multicolumn{3}{c}{\begin{tabular}[c]{@{}c@{}}$abcd$\\$\textcolor{gray}{(abc)}$\end{tabular}} & 
\end{tabular}
    \end{minipage}
    \caption{A dataset and the corresponding closure structure}
    \label{fig:local_global_enumeration}
\end{figure}

We explain the difference by a small example from Fig.~\ref{fig:local_global_enumeration}. To check, whether an itemset $X$ is a key one needs to verify that all $|X|$ immediate itemsets of size $|X| - 1$ are in this structure and belong to different equivalence classes than $X$. For example, when computing $\mathcal{C}_3$, to check whether $abc$ is a key, it is enough to ensure that $ab$, $ac$ and $bc$ are in $\mathcal{K}_2$ and belong to equivalence classes different from that of $abc$. For passkeys one also needs to check ``minimality in size'' in the equivalence class. Key $abc$ does not comply with the last constraint, since the smallest key is $d \in Equiv(abc'')$. To verify it one should check keys in $\mathcal{K}_1$. Thus, for computing the closure structure one needs to store all closed itemsets corresponding to enumerated passkeys, and the worst-case space complexity of the algorithm is proportional to the number of closed itemsets $O(2^{|M|})$. However, the related problem --enumeration of the passkeys-- may require much less space than enumeration of keys, since the number of keys can be exponentially larger than the number of passkeys (see Section~\ref{ssec:computing_min_keys}, Proposition~\ref{prop:one_vs_exp}).


%
\subsection{Computing passkeys. Towards polynomial complexity}\label{ssec:computing_min_keys}
%

The complexity of the GDPM algorithm is higher than that of simple enumerators of closed itemsets.
However, knowing the closure level to which a closed itemset belongs allows us to significantly reduce the complexity of enumeration of its passkeys.    

Keys play an important role in FCA and in the related fields~\citep{han2000mining}, e.g., graph theory, database theory, etc. Below we list the complexity of some problems related to computing or enumerating keys or passkeys.
It is known that computing keys and generators of a closed itemset is intractable.
In~\citep{kuznetsov2007stability}, it was shown that the problem of enumerating keys is \#P-complete\footnote{%
In particular, see Theorem 4, where the numerator of integral stability index is the number of generators of a closed itemset.
}.
The authors of~\citep{gunopulos2003discovering} showed that the problem of enumerating keys is \#P-complete.
More precisely, they studied the problem of enumerating minimal keys of a relational system.
Because of the correspondence between minimal keys of a relational system and keys --in the sense we use in this paper-- one can use this result to compute the complexity of enumerating keys.
Finally, in~\citep{hermann2008complexity} the authors studied the problem of enumerating passkeys and showed that this problem is \#coNP-complete.

Let us consider the complexity of the problem of computing the passkeys.

\begin{problem} PASSKEY\\
    INPUT: context $\mathbb{K} = (G,M,I)$ and subset of attributes $X\subseteq M$.\\
    QUESTION: Is $X$ a passkey, i.e., it is a key of $X''$ and there is no key of $X''$  of size less than $|X|$?
\label{prob:passkey}    
\end{problem}

\begin{proposition} The PASSKEY problem is coNP-complete.
\end{proposition}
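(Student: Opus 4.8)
The plan is to prove the two inclusions separately: first that PASSKEY lies in coNP, and then that it is coNP-hard via a reduction from the complement of a classical NP-complete problem — I would use \textsc{Vertex Cover}.

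For membership in coNP I would exhibit a short, polynomially checkable certificate of a \emph{negative} instance. The key preliminary remark is that for every $X \subseteq M$ one has $X \in Equiv(X'')$, because $X \subseteq X''$ and $X' = X''' = (X'')'$; hence $X''$ always possesses a key of size at most $|X|$. I would then establish the characterisation
\[
X \text{ is not a passkey } \iff \exists\, Z \subseteq M \text{ with } Z' = X' \text{ and } |Z| < |X|.
\]
The direction $\Leftarrow$ is immediate: any inclusion-minimal subset of such a $Z$ keeping the same closure is a key of $X''$ of size $<|X|$. For $\Rightarrow$, let $p$ be the common size of the passkeys of $X''$ (so $p \le |X|$ by the remark); if $p<|X|$ take $Z$ to be a passkey, and if $p=|X|$ then $X$ itself cannot be a key — a key of minimum size would be a passkey — so by downward closure of keys there is $m \in X$ with $(X\setminus\{m\})' = X'$, and $Z = X\setminus\{m\}$ works. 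Since $Z' = X'$ is testable in $O(|G|\,|M|)$ time, the complement of PASSKEY is in NP, so PASSKEY is in coNP.

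For coNP-hardness I would exploit the transversal reading of generators: if a closed itemset $B$ has extent $A = B'$, then $Y \subseteq B$ lies in $Equiv(B)$ exactly when $Y$ meets $B \setminus g'$ for every $g \in G\setminus A$, so the keys of $B$ are the minimal transversals and the passkeys the minimum transversals of the hypergraph $\mathcal{H}_B = \{\,B\setminus g' \mid g \in G\setminus A\,\}$. The reduction takes an instance $(H,k)$ of \textsc{Vertex Cover}, assumes $1\le k\le |V(H)|-2$ (the other values of $k$ are trivial and can be sent to a fixed yes/no instance of PASSKEY), and forms an auxiliary graph $H'$ on $V(H)\cup W$ with $|W| = |V(H)|-k-1$ fresh vertices, keeping the edges of $H$ and adding every edge $\{w,v\}$ with $w\in W$ and $v\in V(H)$. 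A one-line case analysis (split on whether a transversal contains all of $W$) shows that $V(H)$ is always a minimal transversal of $H'$ — deleting any $v$ leaves $\{w_1,v\}$ uncovered — and that it is a \emph{minimum} transversal precisely when $H$ has no vertex cover of size $\le k$. Finally I would realise $H'$ as a context $(G,M,I)$ with $M = V(H)\cup W$, one object $g_*$ owning all of $M$, and, for every edge $\{u,v\}$ of $H'$, one object owning $M\setminus\{u,v\}$; then $M$ is closed, $\mathcal{H}_M = E(H')$, and the designated set $X := V(H)$ satisfies $X'' = M$ and is a passkey of $(G,M,I)$ iff $V(H)$ is a minimum transversal of $H'$ iff $(H,k)$ is a \emph{no}-instance of \textsc{Vertex Cover}. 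As the complement of \textsc{Vertex Cover} is coNP-complete and the construction is polynomial, PASSKEY is coNP-hard, hence coNP-complete.

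I expect the routine parts — checking that $M$ is closed and that $\mathcal{H}_M$ is exactly $E(H')$, and dispatching the boundary values of $k$ — to be mechanical. The main obstacle is the hardness gadget: it must simultaneously guarantee that $X = V(H)$ is a genuine key of the top concept (so that the ``$X$ is a key'' clause of the problem holds, which is exactly what the complete bipartite attachment to $W$ secures) and tune the transversal number of $H'$ so that ``$X$ has minimum size'' coincides with the coNP condition $\tau(H) > k$. Extracting both properties from one polynomial-size gadget is the delicate step; the coNP-membership direction is comparatively routine.
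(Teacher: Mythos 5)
Your proof is correct, but it takes a genuinely different route from the paper's. The paper argues through two auxiliary decision problems: it observes that KEY OF SIZE AT MOST $K$ is polynomially equivalent to SET COVERING, concludes that its complement is coNP-complete, and then \emph{asserts} a polynomial reduction from that complement to PASSKEY without constructing it; membership in coNP is not discussed at all. You instead reduce directly from the complement of VERTEX COVER and, crucially, you build the explicit gadget that the paper's final step silently requires: since PASSKEY takes a concrete candidate $X$ as input rather than a size threshold $k$, one must manufacture a set that is guaranteed to be a key of its closure and whose minimality \emph{in size} encodes the covering question. Your padding of $H$ by the complete bipartite attachment to the fresh vertex set $W$ does exactly this, making $V(H)$ a minimal transversal of $H'$ unconditionally and a minimum one precisely when $\tau(H)>k$. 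Both arguments share the same combinatorial core --- the generators of a closed itemset $B$ with extent $A$ are the transversals of the hypergraph $\{\,B\setminus g' \mid g\in G\setminus A\,\}$, so minimum generators are minimum covers --- but yours is self-contained where the paper's is a sketch, and you additionally supply the coNP-membership certificate (a set $Z$ with $Z'=X'$ and $|Z|<|X|$, together with the characterization showing such a $Z$ exists exactly when $X$ fails to be a passkey), which is needed for the completeness claim and is absent from the paper. The one point to keep explicit in a final write-up is the boundary condition $k\le |V(H)|-2$ guaranteeing $W\neq\emptyset$, which your minimality argument relies on and which you correctly isolate.
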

 
\begin{proof} We introduce the following auxiliary problems:
 
\begin{enumerate}
    \item[]
    \begin{problem}KEY OF SIZE AT MOST $K$\\
        INPUT:  context $\mathbb{K} = (G,M,I)$, intent $B\subseteq M$ and natural number $k \in \mathbb{N}$.\\
        QUESTION: Is there a key of $B$ of size less than or equal to $k$?
        \label{prob:key_k}
    \end{problem}
    
    \item[]
    \begin{problem} NO KEY OF SIZE AT MOST $K$\\
        INPUT: context $\mathbb{K} = (G,M,I)$, intent $B\subseteq M$ and natural number $k \in \mathbb{N}$.\\
        QUESTION: Is it true that there is no key of $B$ of size less than or equal to $k$?\\
        \label{prob:no_key_k}
    \end{problem}
\end{enumerate}

Problem~\ref{prob:key_k} is trivially polynomially equivalent to the famous $NP$-complete SET COVERING problem.
Thus, Problem~\ref{prob:no_key_k} is coNP-complete, and this problem is polynomially reducible to PASSKEY, so PASSKEY is coNP-complete.\smartqed
\end{proof}

\textbf{Corollary.} Passkeys cannot be enumerated in total polynomial time unless P = coNP.

Thus, enumeration of the keys or generators are computationally hard. However, knowing the closure level to which a closed itemset belongs, the problem of enumerating passkeys becomes polynomial since for a closed itemset $B \in \mathcal{C}_k$, the number of all possible passkeys is given by
$$\binom{|M|}{k} = \frac{|M|!}{k!(|M|-k)!} = \frac{(|M| - k + 1) \cdot \ldots \cdot |M|}{k!} = \frac{|M|^{k}}{k!} \in O(|M|^{k}).$$ 
Thus, if $k$ is constant, one can identify all passkeys in polynomial time.
We also emphasize that the maximal possible $k$ depends on $M$, i.e., $k \leq |M|$, however for the concepts from the first levels, where $k$ is small, we can neglect the fact that $k$ may be proportional to $M$.
Moreover, a closed itemset $B$ from the $k$-th level provides the implications with the shortest antecedent $k$, i.e., size of the passkey, and largest consequent $|B|-k$.
Thus the first levels provide the implications that are commonly considered to be the best for interpretation purposes~\citep{bastide2000mining}.

Moreover, passkeys may provide much shorter representation of equivalence classes than keys.

\begin{proposition}
Consider the context $\mathbb{K} = (G,M,I)$, where $G= \{g_0, \ldots, g_n\}$, $M = \{m_0, \ldots, m_{2n}\}$, $g_0' = M$, and $g_i' = M\setminus \{m_0, m_i, m_{n+i}\}$.
The number of keys is $2^n$, whereas there is a unique passkey.
\label{prop:one_vs_exp}
\end{proposition}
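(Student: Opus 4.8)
The plan is to locate the single closed itemset responsible for the blow-up, which is $M$ itself, and to turn ``being a key of $M$'' into ``being a minimal transversal'' of a very structured hypergraph. First I would note that $M$ is closed: $g_0$ is the only object incident to every attribute (for $i\geq 1$, $g_i'=M\setminus\{m_0,m_i,m_{n+i}\}$ misses $m_0$), so $M'=\{g_0\}$ and $M''=\{g_0\}'=g_0'=M$. Then I would describe $Equiv(M)$. For $Z\subseteq M$ we have $Z''=M$ iff $Z'=M'=\{g_0\}$ (apply $(\cdot)'$ and use $Z'''=Z'$); since $g_0\in Z'$ always holds (as $Z\subseteq M=g_0'$), this says $g_i\notin Z'$ for every $i\in\{1,\ldots,n\}$, i.e. $Z\not\subseteq g_i'=M\setminus\{m_0,m_i,m_{n+i}\}$, i.e. $Z\cap\{m_0,m_i,m_{n+i}\}\neq\emptyset$. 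Hence $Equiv(M)$ is exactly the family of transversals of the hypergraph $\mathcal{H}=\{\,\{m_0,m_i,m_{n+i}\}\,:\,1\leq i\leq n\,\}$.

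Second, I would enumerate the keys of $M$, i.e. the minimal elements of $Equiv(M)$, equivalently the minimal transversals of $\mathcal{H}$. If a minimal transversal $Z$ contains $m_0$, then $\{m_0\}$ already meets every edge of $\mathcal{H}$, so minimality forces $Z=\{m_0\}$. If $Z$ avoids $m_0$, then, since the $n$ blocks $P_i=\{m_i,m_{n+i}\}$ are pairwise disjoint and disjoint from $\{m_0\}$, $Z$ must meet each $P_i$; moreover, if $Z$ contained two elements of some $P_i$, one of them could be deleted without uncovering any edge (the other element still covers the edge of index $i$, and the deleted element lies in no other edge), contradicting minimality. So a minimal transversal avoiding $m_0$ picks exactly one element of each $P_i$, and conversely every such ``selector'' is a minimal transversal. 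This yields exactly $2^n$ minimal transversals of size $n$, together with $\{m_0\}$.

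Finally, the passkeys of $M$ are its keys of minimum size; for $n\geq 2$ that minimum is $1$ and is attained only by $\{m_0\}$, so $pKey(M)=\{\{m_0\}\}$ is a singleton while $M$ has $2^n$ keys of size $n$ besides it. I do not expect a genuine obstacle: the construction is engineered precisely so that the only delicate steps are the reformulation of $Equiv(M)$ as transversals of $\mathcal{H}$ and the pairwise-disjointness of the blocks $P_i$, which is exactly what pins the minimal transversals avoiding $m_0$ to the $2^n$ selectors. It is worth recording the small caveat that the clean statement needs $n\geq 2$: for $n=1$ the three attributes $m_0,m_1,m_2$ are all singleton keys, so there is no unique passkey.
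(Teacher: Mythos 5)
Your proof is correct, and it reaches the same count as the paper ($2^n$ keys of size $n$ plus the single passkey $\{m_0\}$), but by a different and more self-contained route. The paper identifies the $2^n$ large keys as the attribute sets that span a contranominal-scale subcontext on $\{g_1,\ldots,g_n\}$ -- choosing, for each $i$, one of the two attributes $m_i, m_{n+i}$ whose extent is $\{g_1,\ldots,g_n\}\setminus\{g_i\}$ -- and leans on the correspondence between contranominal scales and minimal generators established earlier in the paper (the cited Lemma 6 of Albano); it then observes that all these combinations, together with $m_0$, have extent $\{g_0\}$. You instead characterize $Equiv(M)$ directly as the transversals of the hypergraph with edges $\{m_0,m_i,m_{n+i}\}$, so that keys become minimal transversals, and you enumerate those by the disjointness of the pairs $P_i=\{m_i,m_{n+i}\}$. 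These are two descriptions of the same combinatorial object (a selector of one element per disjoint pair is exactly an attribute set inducing a contranominal scale), but your version is elementary, avoids the external lemma, and makes explicit two points the paper glosses over: the verification that $M$ is indeed closed with $M'=\{g_0\}$, and the $n\geq 2$ caveat (for $n=1$ all three singletons are keys of minimum size, so the passkey is not unique). One further remark applying to both proofs: the total number of keys is $2^n+1$ once the passkey $\{m_0\}$ is included, which agrees with the paper's own proof text and its Fig.~\ref{fig:passkey_vs_key} but means the proposition's phrase ``the number of keys is $2^n$'' should be read as counting the keys other than the passkey.
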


An example of such a context for $n = 3$ is given in Fig.~\ref{fig:passkey_vs_key}. The elements of the equivalence class of formal concept $(g_0, M)$, which has a unique passkey and exponentially many keys, are given on the right.

\begin{figure}[h]
    \begin{minipage}{.36\textwidth}
        \centering
        \setlength{\tabcolsep}{1.5pt}
        \begin{tabular}{l|lllllll} \toprule
             &$m_0$&$m_1$&$m_2$&$m_3$&$m_4$&$m_5$&$m_6$\\ \midrule
            $g_0$ &\X &\X& \X& \X& \X& \X&\X \\
            $g_1$  &  &  &\X& \X &  & \X& \X\\
            $g_2$  &  &\X&  & \X& \X&  & \X \\
            $g_3$  &  &\X&\X &  & \X& \X &  \\ \bottomrule
        \end{tabular}
    \end{minipage}
    \begin{minipage}{.58\textwidth}
        \setlength{\tabcolsep}{1.5pt}
        \begin{tabular}{ll}  \toprule
            set name& members of the set \\ \midrule
            $pKey(M)$&  $m_0$\\\hline 
            $Key(M) \setminus pKey(M)$& \begin{tabular}[c]{@{}l@{}} $m_1m_2m_3$, $m_1m_2m_6$, $m_1m_5m_3$,\\ $m_1m_5m_6$, $m_4m_2m_3$, $m_4m_2m_6$,\\ $m_4m_5m_6$, $m_4m_5m_6$\end{tabular} \\\hline 
            $Equiv(M) \setminus Key(M)$& \begin{tabular}[c]{@{}l@{}} $m_0m_1m_2m_3$, $m_0m_1m_2m_6$,\\ $m_0m_1m_5m_3$, $m_0m_1m_5m_6$, \\$m_0m_4m_2m_3$, $m_0m_4m_2m_6$,\\ $m_0m_4m_5m_6$, $m_0m_4m_5m_6$\end{tabular} \\ \bottomrule
    \end{tabular}
    \end{minipage}
    \caption{An example of the context with the closed itemset $M$ having exponentially many keys and a unique passkey}
    \label{fig:passkey_vs_key}
\end{figure}

\begin{proof}
Let us consider the number of combinations of attributes that make a contranominal-scale subcontext of size $n$.
The set of objects of this subcontext is $\{g_1, \ldots, g_n\}$.
Each attribute with the extent $ \{g_1, \ldots, g_n\} \setminus \{g_i\}$ can be chosen in 2 different ways, i.e. either $m_i$ or $m_{i + n}$.
Thus, there exist $2^n$ ways to create the contranominal-scale subcontexts of size $n$ with attribute extents $\{g_1, g_2, \ldots, g_n\} \setminus \{g_i \}$, $i = 1, \ldots, n$.
All these $2^{n}$ attribute combinations have the same extent $\{g_0\}$ in the original context $\mathbb{K}$. In this context $\mathbb{K}$, there is another attribute $m_0$ having the same extent $\{g_0\}$. It means that the equivalence class $Equiv(M)$ has $2^n + 1$ keys, namely, $\{ m_0\} \cup \{m_{1 + k_1}m_{2 + k_2} \ldots m_{n + k_n} \mid k_i \in \{0, n\}, i = 1, \ldots, n\}$, and only one passkey $m_0$. \smartqed
\end{proof}

\section{Experiments}\label{sec:experiments}
\subsection{Characteristics of the datasets}

In this section we study some properties of the closure structure on real-world datasets. We use datasets from the LUCS-KDD repository~\citep{coenen2003}, their parameters are given in Table~\ref{tab:dataset}. The number of attributes in original/binarized data is given in columns ``\#attributes''/``$|M|$'', respectively. The relative number of ``1''s in binary data is given in column
``density''. We do not use the class labels of objects to compute the closure structure, but we use them to evaluate itemsets in Section~\ref{ssec:meaningfulness_f1}. The number of classes is given in column ``\#classes''.

\begin{table}[t]
    \caption{Parameters of the datasets}
    \label{tab:dataset}
    \centering
\setlength{\tabcolsep}{2.1pt}
\begin{tabular}{l|cccccc|l} \toprule
name &$|G|$&$|M|$&density&\#classes&\#attr.&\begin{tabular}[c]{@{}c@{}}closure\\index\\$CI$\end{tabular}& $2^{CI} \leq |\mathcal{C}| \leq 2^{\min(|M|,|G|)}$ \\\midrule
adult & 48842 & 95 & 0.15 & 2 & 14 & 12 &4.1e+3 $\leq$ 3.6e+5 $\leq$ 4.0e+28 \\
auto & 205 & 129 & 0.19 & 6 &\underline{\textbf{8}}&\underline{\textbf{8}}& 2.6e+2 $\leq$ 5.8e+4 $\leq$ 6.8e+38 \\
breast & 699 & 14 & 0.64 & 2 & 10 & 6 & 6.4e+1 $\leq$ 3.6e+2 $\leq$ 1.6e+4 \\
car\_evaluation & 1728 & 21 & 0.29 & 4 &\underline{\textbf{6}}&\underline{\textbf{6}}& 6.4e+1 $\leq$ 8.0e+3 $\leq$ 2.1e+6 \\
chess\_kr\_k & 28056 & 40 & 0.15 & 18 &\underline{\textbf{6}}&\underline{\textbf{6}}& 6.4e+1 $\leq$ 8.5e+4 $\leq$ 1.1e+12 \\
cylinder\_bands & 540 & 120 & 0.28 & 2 & 39 & 19 & 5.2e+5 $\leq$ 4.0e+7 $\leq$ 1.3e+36 \\
dermatology & 366 & 43 & 0.28 & 6 & 33 & 9 & 5.1e+2 $\leq$ 1.4e+4 $\leq$ 8.8e+12 \\
ecoli & 327 & 24 & 0.29 & 5 & 8 & 6 & 6.4e+1 $\leq$ 4.3e+2 $\leq$ 1.7e+7 \\
glass & 214 & 40 & 0.22 & 6 & 10 & 8 & 2.6e+2 $\leq$ 3.2e+3 $\leq$ 1.1e+12 \\
heart-disease & 303 & 45 & 0.29 & 5 &  & 9 & 5.1e+2 $\leq$ 2.6e+4 $\leq$ 3.5e+13 \\
hepatitis & 155 & 50 & 0.36 & 2 & 19 & 11 & 2.0e+3 $\leq$ 1.4e+5 $\leq$ 1.1e+15 \\
horse\_colic & 368 & 81 & 0.21 & 2 & 27 & 9 & 5.1e+2 $\leq$ 1.7e+5 $\leq$ 2.4e+24 \\
ionosphere & 351 & 155 & 0.22 & 2 & 34 & 17 & 1.3e+5 $\leq$ 2.3e+7 $\leq$ 4.6e+46 \\
iris & 150 & 16 & 0.25 & 3 &\underline{\textbf{4}}&\underline{\textbf{4}}& 1.6e+1 $\leq$ 1.2e+2 $\leq$ 6.6e+4 \\
led7 & 3200 & 14 & 0.50 & 10 &\underline{\textbf{7}}&\underline{\textbf{7}}& 1.3e+2 $\leq$ 2.0e+3 $\leq$ 1.6e+4 \\
mushroom & 8124 & 88 & 0.25 & 2 & 22 & 10 & 1.0e+3 $\leq$ 1.8e+5 $\leq$ 3.1e+26 \\
nursery & 12960 & 27 & 0.30 & 5 &\underline{\textbf{8}}&\underline{\textbf{8}}& 2.6e+2 $\leq$ 1.2e+5 $\leq$ 1.3e+8 \\
page\_blocks & 5473 & 39 & 0.26 & 5 & 10 & 8 & 2.6e+2 $\leq$ 7.2e+2 $\leq$ 5.5e+11 \\
pen\_digits & 10992 & 76 & 0.21 & 10 & 16 & 11 & 2.0e+3 $\leq$ 3.6e+6 $\leq$ 7.6e+22 \\
pima & 768 & 36 & 0.22 & 2 & & 8 & 2.6e+2 $\leq$ 1.6e+3 $\leq$ 6.9e+10 \\
soybean & 683 & 99 & 0.32 & 19 & 35 & 13 & 8.2e+3 $\leq$ 2.9e+6 $\leq$ 6.3e+29 \\
tic\_tac\_toe & 958 & 27 & 0.33 & 2 & 9 & 7 & 1.3e+2 $\leq$ 4.3e+4 $\leq$ 1.3e+8 \\
wine & 178 & 65 & 0.20 & 3 & 13 & 7 & 1.3e+2 $\leq$ 1.3e+4 $\leq$ 3.7e+19 \\
zoo & 101 & 35 & 0.46 & 7 & 17 & 7 & 1.3e+2 $\leq$ 4.6e+3 $\leq$ 3.4e+10 \\\bottomrule
\end{tabular}
\end{table}

First of all we study the closure levels of data and show how they are related to the size of the datasets and the estimated number of concepts. Column $CI$ contains the total number of closure levels. These values are quite low w.r.t. the total number of attributes $|M|$, but usually close to the number of attributes in the original data (i.e., ``\#attributes''). In our experiments we observe several datasets, where the closure index $CI$ is equal to the number of attributes in the original data, namely ``auto'', ``car evaluation'', ``chess kr k'', ``iris'', ``led7'', ``nursery''. Thus, from Proposition~\ref{prop:ci_binarized}, there exist at least $CI$ object descriptions such that each pair of them differ in only one pair of numerical value (where numerical values are distinguished up to binarization intervals). For ``car evaluation'' and ``nursery'' the number of closed itemsets of the highest complexity is equal to the number of objects, i.e., for each object description and each its numerical value there exists another object description that differs from it in a pair of attributes.

The observed variability of attribute values may be caused by a very fine-grained (too detailed) discretization. This variability may hamper itemset mining and can serve as an indicator for choosing a coarser binarization. In this paper we do not study an appropriate binarization strategy, but use the one generally used in related studies~\citep{coenen2003}.

However, high values of $CI$ may be caused not only by a too fine-grained discretization of real-valued attributes, but also the ``probabilistic'' nature of data, i.e., when the relation between object $g$ and attribute $m$ may appear with a certain probability, e.g., in the market basket analysis an item (dis)appears in a transaction with a certain probability. We consider this kind of data in the Appendix.

In Section~\ref{ssec:lower_bound_lattice_size} we mentioned that $CI$ may be used to compute a lower bound on the size of the concept lattice, i.e., the total number of concepts/closed itemsets is bounded from below by $2^{CI}$. In the last column of Table~\ref{tab:dataset} we list this lower bound together with the actual number of closed itemsets $\mathcal{C}$ and the upper bound $2^{\min(|M|,|G|)}$. Our experiments clearly show that the lower bound is much closer than the upper bound to the actual value of the lattice size.


In this section we considered the size of the closure structure w.r.t. the dataset size and discussed some indicators of noise or too detailed discretization. In the next section we consider computational aspects of the closure structures and discuss the performance of the GDPM algorithm.

\subsection{Computational performance} 

As it was mentioned in Section~\ref{ssec:computing_cs_keys_min_keys}, when using passkeys to generate closed itemsets, we cannot rely on the canonicity test to ensure uniqueness in generating closed itemsets. Thus, the main limitations of the GDPM algorithm are that (i) it may take a lot of space for a trie to store all generated closed itemsets, (ii) the traversal strategy may also be redundant because we cannot use an efficient strategy to check uniqueness of the generated closed itemset. To estimate the impact of these limitations we study the execution time and the number of nodes in the tries. The results are reported in Table~\ref{tab:me_f}.

\begin{table}[t]
    \caption{Performance of GDPM-INT/EXT and key enumerator Talky-G}
    \label{tab:me_f}
    \centering
    \setlength{\tabcolsep}{2pt}
\begin{tabular}{l|ccc|c|cc|cc|c}\toprule
\multirow{2}{*}{name} & \multirow{2}{*}{$|G|$} & \multirow{2}{*}{$|M|$} & \multirow{2}{*}{\#attr.} & \multirow{2}{*}{\begin{tabular}[c]{@{}c@{}}\#closed\\ itemsets\end{tabular}}& \multicolumn{2}{c}{\#nodes in $\mathcal{T}$} & \multicolumn{3}{c}{runtime, sec} \\
 &  &  &  &  & INT & EXT & INT & EXT & \begin{tabular}[c]{@{}c@{}}Charm +\\Talky-G\end{tabular}\\\midrule
adult & 48842 & 95 & 14 & 359141 & 359141 & 338906& 984 & 837 & 13\\
auto & 205 & 129 & 8 & 57789 & 57390 & 48834 & 6 & 2 & 5\\
breast & 699 & 14 & 10 & 361 & 361 & 354 & 0 & 0 & 1\\
car evaluation & 1728 & 21 & 6 & 8000 & 4875 & 6901 & 1 & 0 & 1\\
chess kr k & 28056 & 40 & 6 & 84636 & 54761 & 68869 & 146 & 148 & 5\\
cylinder bands & 540 & 120 & 39 & 39829537 & 39829537 & 37007525 & 2404 & 4010 & -$^\ast$\\
dermatology & 366 & 43 & 33 & 14152 & 10506 & 12172 & 1 & 1 & 1\\
ecoli & 327 & 24 & 8 & 425 & 425 & 373 & 0 & 0 & 1\\
glass & 214 & 40 & 10 & 3246 & 2393 & 2757 & 0 & 0 & 1\\
heart-disease & 303 & 45 & 0 & 25539 & 19388 & 21869 & 1 & 1 & 1\\
hepatitis & 155 & 50 & 19 & 144871 & 88039 & 122277 & 2 & 1 & 5\\
horse colic & 368 & 81 & 27 & 173866 & 138075 & 138075 & 11 & 9 & 10\\
ionosphere & 351 & 155 & 34 & 23202541 & 23202541 & 20354049 & 2467 & 2585 &12628 \\
iris & 150 & 16 & 4 & 120 & 80 & 111 & 0 & 0 & 1\\
led7 & 3200 & 14 & 7 & 1951 & 1012 & 1931 & 0 & 0 & 1\\
mushroom & 8124 & 88 & 22 & 181945 & 181945 & 171590 & 164 & 121 & 8\\
nursery & 12960 & 27 & 8 & 115200 & 76800 & 103351 & 46 & 51 & 4\\
page blocks & 5473 & 39 & 10 & 723 & 451 & 702 & 0 & 3 & 1\\
pen digits & 10992 & 76 & 16 & 3605507 & 2423123 & 3236109 & 12863 & 3398 & 123\\
pima & 768 & 36 & 0 & 1626 & 1105 & 1444 & 0 & 0 & 1\\
soybean & 683 & 99 & 35 & 2874252 & 2726204 & 2640461 & 379 & 224 & 298\\
tic tac toe & 958 & 27 & 9 & 42712 & 25276 & 34513 & 1 & 1 & 1\\
wine & 178 & 65 & 13 & 13229 & 8518 & 10202 & 1 & 1 &  2 \\
zoo & 101 & 35 & 17 & 4570 & 2933 & 4125 & 0 & 0 &  1\\\midrule
\textbf{average} & 5239 & 57 & 15 & 2947747 & 2883953 & 2680313 & 812 & 475 & 376\\\toprule
\end{tabular}\\
\begin{flushleft}
$^\ast$ The algorithm crashes because of the lack of memory.
\end{flushleft}

\end{table}

Our experiments show that for small datasets, i.e. they contain less than or about 1K objects and a small number of attributes, the difference between the algorithms is negligible.
For larger datasets GDPM-EXT works faster when the number of attributes in original and binarized dataset is high and the number of objects is not too large.
However, both versions are not scalable, meaning that as the size of datasets increases, the computational time increases considerably.
We compare GDPM with the closest combination of the algorithms ``Charm+Talky-G'' implemented in the Coron data mining platform\footnote{http://coron.loria.fr/}.
The Charm algorithm enumerates closed itemsets while Talky-G~\citep{szathmary2014fast} enumerates all keys.
The experiments show that GDPM works slower than the closest in functionality combination of algorithms.
Actually, these results leave space for further study and improvement of the GDPM algorithm, in particular, by using hash tables and vertical representation of datasets.
It is also important to notice that GDPM and ``Charm+Talky-G'' perform different tasks.

Below we consider more deeply the behavior of algorithms GDPM-EXT and GDPM-INT in comparing the execution time and the size of the tries for the datasets where we can observe the difference between the algorithms. 
The level-wise execution time is reported in Fig.~\ref{fig:time_me_f}.
As it can be seen, the first and last levels are fast to compute. The most time-consuming levels are located in the middle of the closure structure.
For example, for ``adult'' levels 1-3 and 9-12 are computed in no more than 10 seconds.
While computing level 5 takes 319 and 233 seconds for GDPM-INT and  GDPM-EXT, respectively.
Almost for all datasets GDPM-EXT works faster than GDPM-INT.
We observe the largest difference in the running time for ``pen digits'', where computing level 6 took 4855 and 1222 seconds for GDPM-INT and GDPM-EXT, respectively.
However, for some datasets GDPM-INT works faster, e.g., for ``cylinder bands'' computing level 10 takes 874 seconds for the GDPM-EXT algorithm and only 316 seconds for GDPM-INT.
Moreover, the running time curve of GDPM-INT is flatter and does not have a pronounced peak.

\begin{figure}
    \centering
    \includegraphics[width = 1.\textwidth]{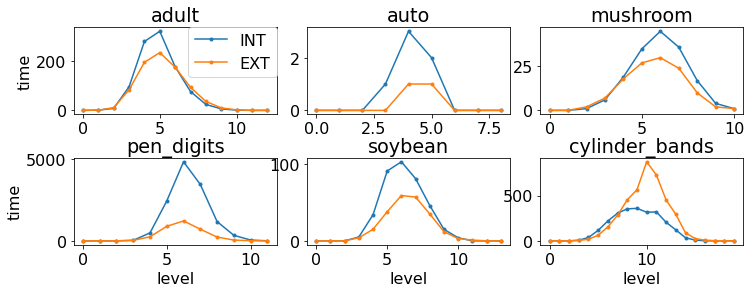}
    \caption{Level-wise running time of GDPM-INT / EXT for some datasets, the level number and time in seconds are given in axes $x$ and $y$, respectively. Level 0 contains only itemset $\emptyset$ in the root}
    \label{fig:time_me_f}
%
    \centering
    \includegraphics[width = 1.\textwidth]{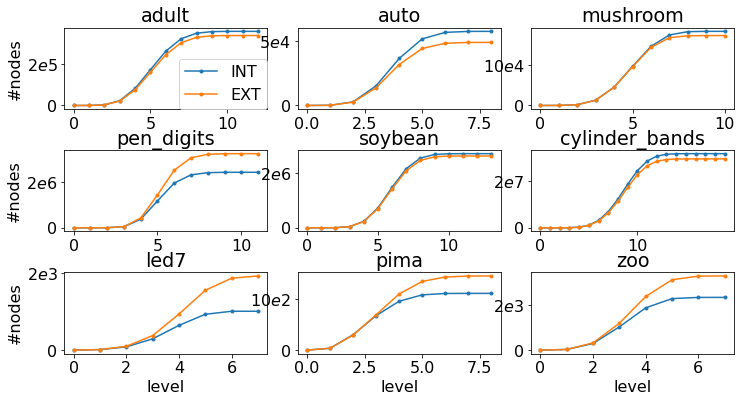}
    \caption{Growing of the tries in the number of nodes, the level number, and the trie size in nodes are given in axes $x$ and $y$, respectively}
    \label{fig:n_nodes_me_f}
\end{figure}

The trie that stores the generating closed itemsets grows at each iteration.
Fig.~\ref{fig:n_nodes_me_f} shows how the tries are growing with each new closure level.
We also show the evolution of the size of tries for the datasets where this difference is most distinguishable, namely ``led7'', ``pima'' and ``zoo''.
For the remaining datasets the sizes of tries that store closed itemsets and their extents are quite close one to another.
The total sizes of tries, as well as the corresponding running times for all datasets are reported in Table~\ref{tab:me_f}.

We emphasize that the size of the trie does not affect a lot the runtime of the duplicate test, i.e., checking if a closed itemset is generated for the first time, which takes at most $O(|M|)$ and $O(|G|)$ time for GDPM-INT and GDPM-EXT, respectively. 

The analysis of the algorithms shows that both algorithms compute first levels fast using not much space.
However, it remains to study whether we can stop computation at the first levels and whether the itemsets from the first levels are useful.
We address these questions in the next sections.

\subsection{Data topology or frequency distribution within levels}

The common approach to frequent itemset mining consists in gradually mining itemsets of decreasing frequency. For frequency-based enumeration methods infrequent itemsets are the last to be computed.
As we discussed above, it results in the need for computing exponentially many itemsets.
However, not all frequent itemsets are interesting and there are not that many infrequent itemsets that are worth to be considered.

In this section we study the frequency distribution of itemsets within closure levels.
We take the number of itemsets at level $k$ as 100\% and compute the ratio of itemsets within 5 frequency bins: [0, 0.2), [0.2, 0.4), [0.4, 0.6), [0.6, 0.8), [0.8, 1.0].
In Fig.~\ref{fig:frequency_disctribution} we report the distribution within these frequency bins. For example, the first figure shows that ``breast'' dataset has 6 levels, the largest one is the 4th, since it accounts for 34.4\% of all closed itemsets.
The first level contains itemsets of all frequency bins, while the last two levels only contain itemsets of frequency at most 0.4, i.e., [0, 0.2) and [0.2, 0.4).

\begin{figure}[t]
    \centering
    \includegraphics[width = 1.\textwidth]{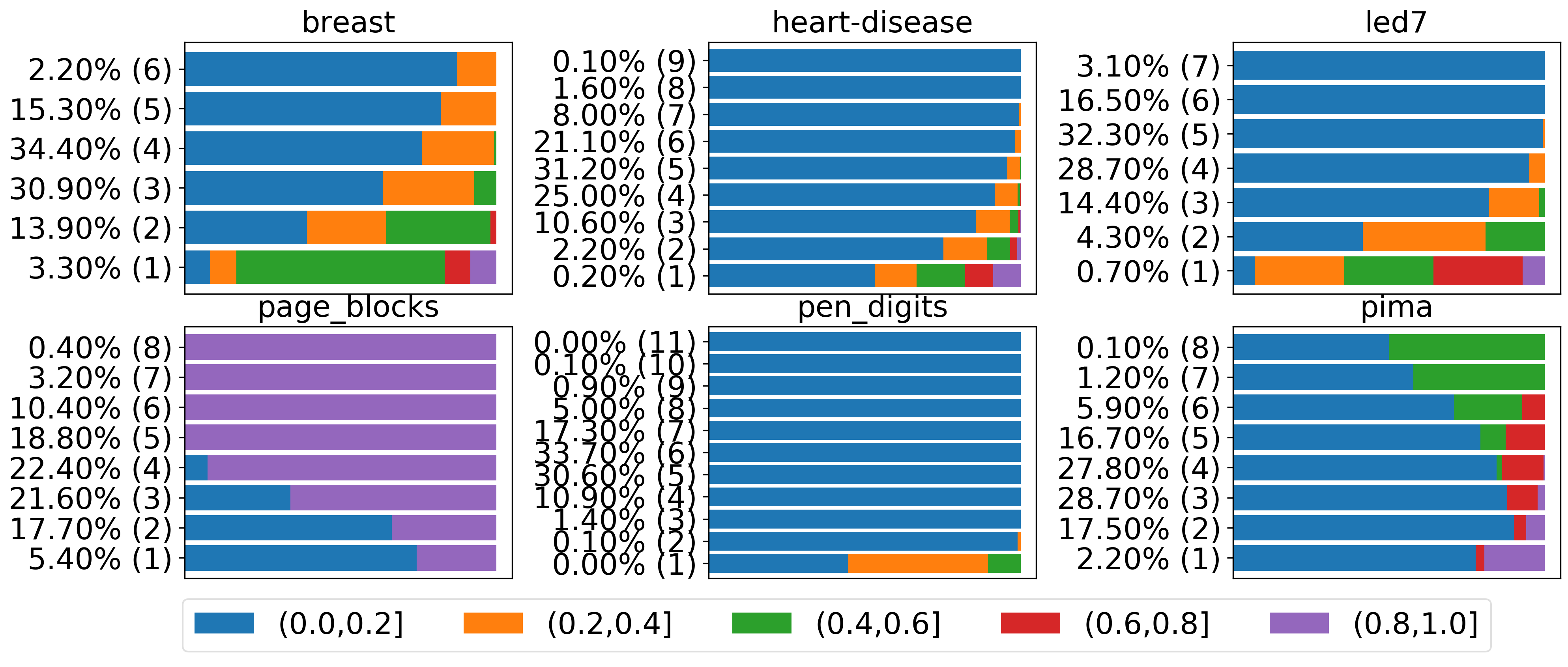}
    \caption{Distribution of closed itemsets within 5 frequency bins by the closure levels. The horizontal bars represent closure levels, the level number is given in parentheses. The rightmost value is the percentage of closed itemsets $|\mathcal{C}_{k}|/|\mathcal{C}|$. The width of the bar is proportional to the ratio of $\mathcal{C}_{k}$ of frequency $(v_1, v_2]$ among $\mathcal{C}_k$.}
    \label{fig:frequency_disctribution}

    \centering
    \includegraphics[width = 1.\textwidth]{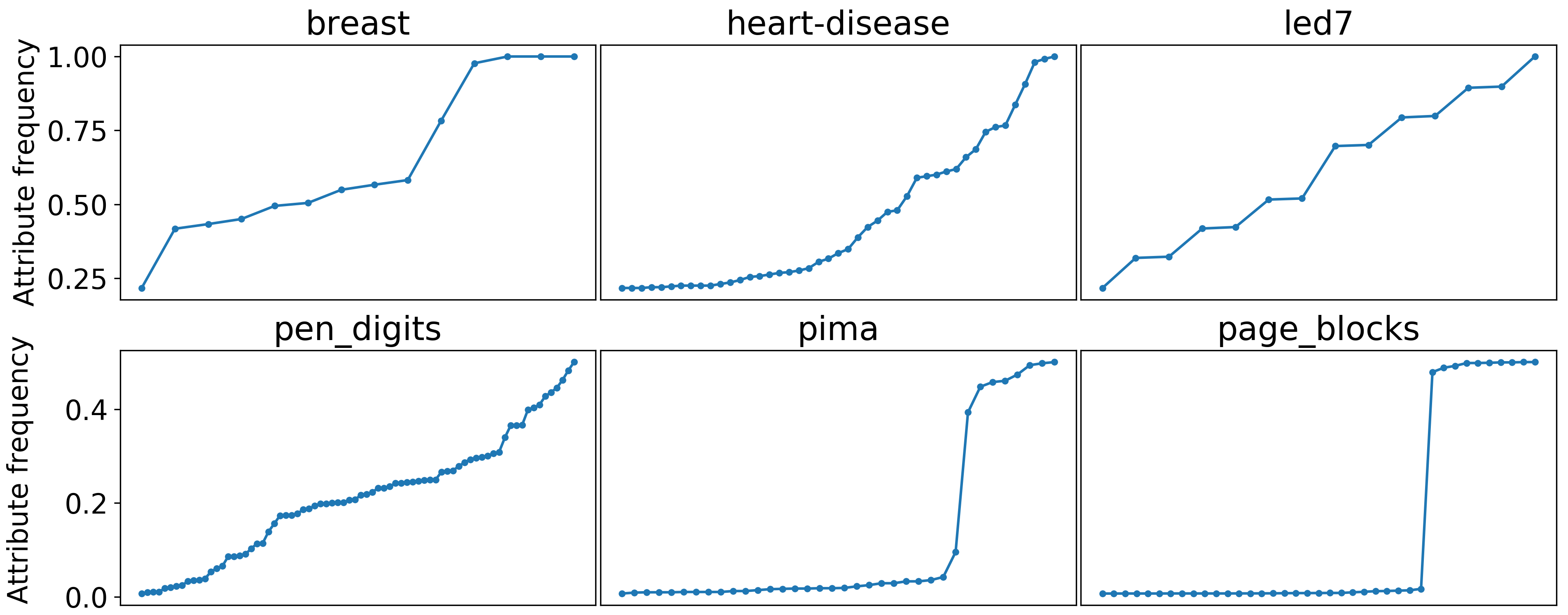}
    \caption{Frequency of attributes in ascending order for some datasets.}
    \label{fig:attribute_frequecy}
\end{figure}

Almost in all datasets we observe a quite similar behavior: the proportion of frequent itemsets decreases with the closure levels.
Actually, the first level contains a relatively small ratio of infrequent itemsets and a quite large ratio of frequent ones. In the next levels the ratio of frequent itemsets decreases, while the ratio of infrequent ones increases. 

For example, for ``breast'' dataset, closed itemsets of frequency in $(0.8, 1.0]$ (violet) are present only in the 1st level, where they constitute 8.3\% of all closed itemsets of $\mathcal{C}_1$.
The closed itemsets of frequency in $(0.6, 0.8]$ (red) account for 8.3\% of $\mathcal{C}_1$ and 2\% of $\mathcal{C}_2$.
The closed itemsets of frequency $(0.4, 0.6]$ (green) account for 67\% of $\mathcal{C}_1$, 33\% of $\mathcal{C}_2$, 7.1\% of $\mathcal{C}_3$, and 0.8\% of $\mathcal{C}_4$.
The less frequent ones (frequency below 0.2) constitute 8.3\% of $\mathcal{C}_1$ and 87.5\% of the last level $\mathcal{C}_6$. 

However, some datasets, e.g., ``page blocks'' and ``pima,'' show a different behaviour: the rate of frequent itemsets increases with the closure level.
This is typical for the datasets that have two types of attributes, e.g., of very low and very high frequency.
To illustrate it, we show in Fig.~\ref{fig:attribute_frequecy} the average frequency of the attributes of the datasets given in Fig.~\ref{fig:frequency_disctribution}.
We can see that the attribute frequency distributions of ``page blocks'' and ``pima'' differ from the others, i.e., there exist two groups of attributes of very contrasted frequencies.
Thus, the frequency distribution of attributes shows the itemset frequency distributions within the closure levels.

For the majority of datasets, the first closure levels contain itemsets of diverse frequencies, thus by generating only a \emph{polynomial} number of itemsets we may obtain itemsets of quite different frequencies.
 
Moreover, the frequency distributions by levels, as shown in Fig.~\ref{fig:frequency_disctribution}, provide a representation of the ``topology'' of the dataset in terms of closed sets.
For example, for ``breast'' dataset we may infer that we need at most 6 items to describe any closed itemset.
Moreover, we may use only itemsets of size 1 or 2 to concisely describe itemset of frequency above 0.6.
The closure structure provides the number of closed itemsets of a given frequency that can be represented without loss by passkeys of size $k$.
For example, the first four levels of ``breast'' dataset contain 8 (66.7\%), 17 (33.3\%), 8 (7.1\%) and 1 (0.7\%) closed itemsets of frequency $(0.4, 0.6]$.
Thus the majority of these frequent closed itemsets can be succinctly represented by passkeys of size 2 or 3.

From the experiments we can see that the first levels contain the itemsets with the most ``diverse'' frequency ranges. However, it is not clear how these itemsets cover the data in a satisfying way and this is what we study in the next section.

\subsection{Coverage and overlaps}

Usually, we want that the generated concepts cover data in the most uniform way, i.e., that the whole dataset is completely covered and that each entry is covered by roughly the same number of itemsets.
We evaluate the level-wise distribution of the closed itemsets by measuring the coverage and overlapping ratios.

Let $\mathbb{K} = (G, M, I)$ be  a formal context and $\mathcal{C}$ be a set of formal concepts, then the coverage ratio is defined as the ratio of non-empty entries in the context covered by the concepts from $\mathcal{C}$.
The overlapping ratio is defined as the average number of itemsets that cover a cell in the data.
Since the overlapping ratio is computed for the entries covered by at least one itemset, it may take any value from 1 to the number of itemsets.

Fig.~\ref{fig:coverage_overlaps} shows the coverage rate and overlapping ratios by levels for some datasets. The itemsets from the first level always cover entirely the whole dataset. Then, the coverage rate steadily decreases with each new level. For example, the coverage rates of the first levels of ``pima'' dataset are close to 1, at level 5 the coverage rate drops to 0.98, at level 7 it decreases to 0.82, and at level 8 it drops to 0.56. 

The overlapping ratio --the average number of itemsets that cover an entry of the dataset-- may vary a lot within a single level, but usually the closure levels in the middle have the largest average number of itemsets per entry.
For example, the overlapping ratio of ``pima'' may vary a lot from level to level.
The overlapping ratio at the 2nd level is low  (equals to 7.32 $\pm$ 1.24), at the next levels it increases and achieves its maximum at the level 4 (30.27 $\pm$ 8.09).
The latter means that each entry of a dataset is covered on average by 30 itemsets from the level 4.

Almost for all datasets we observe quite the same behavior: usually the coverage rate starts to decrease at the levels having the highest overlapping ratio.
We also observe that with quite the same overlapping ratio the itemsets from the first levels (before the peak on the overlapping curve) cover much larger fragments of data than those from the last levels.
The latter means that, compared with the first levels, the description of the last levels is more focused on very particular data fragments. 

\begin{figure}
    \centering
    \includegraphics[width = 1.\textwidth]{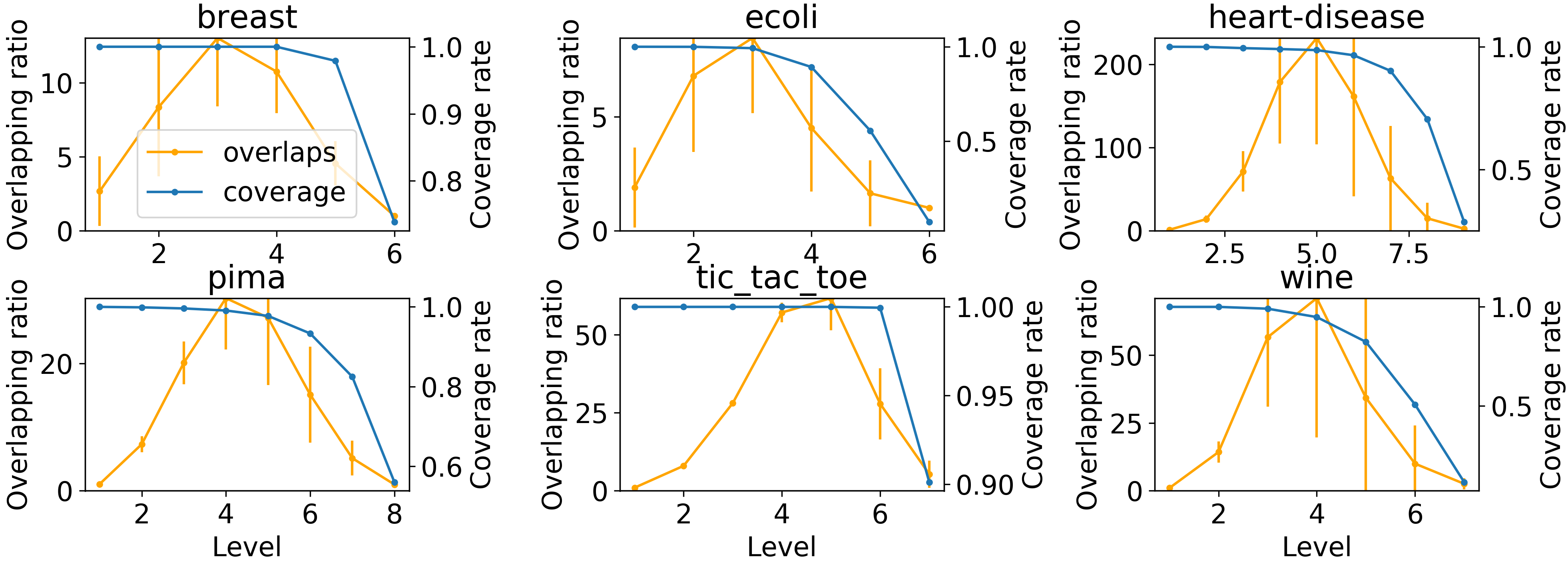}
    \caption{Coverage rate and mean overlapping ratio of the cells by levels. The vertical lines show the standard deviation of the overlapping ratio over cells within a given level.}
    \label{fig:coverage_overlaps}
\end{figure}

However, based on these experiments, we should now investigate whether the itemsets from the first levels are more useful than those from the last levels. 

\subsection{Meaningfulness of concepts (w.r.t. F1 measure)}\label{ssec:meaningfulness_f1}

In the previous sections we considered the closure levels and showed that the first levels contain itemsets with more diverse frequencies, with a larger amount of frequent itemsets and a smaller number of infrequent ones. In Section~\ref{ssec:closure_structure} we also mentioned that the itemsets from the first levels can be replaced by passkeys from the same equivalence class, allowing for much more succinct descriptions.

In this section we study the usefulness of the generated closed itemsets and focus on the ability of itemsets to describe ``actual'' classes in data. To measure this ability of an itemset we study the average F1 measure by levels. We consider each closed itemset $B$ with its extent $A$ as well as the class labels that were not used before. For a formal concept $(A,B)$, where each object $g \in A$ has an associated class label $class(g) \in \mathcal{Y}$, the class label of an itemset is computed by the majority vote scheme, i.e. $class(B) = \arg\max_{y \in \mathcal{Y}} \{|\{ class(g) = y \}|\mid g \in A \}$. Then, F1 measure for $(A,B)$ is given by
$$ F1(A,B) = \frac{2tp}{2tp + fp + fn}$$ where $tp = |\{g \mid class(g) = class(B), g \in A\}|$, $fp = |A| - tp$, $fn = |\{g \mid class(g) = class(B), g \in G\setminus A\}$.

F1 measure is a harmonic mean of precision and recall, i.e., it takes into account how large is the ratio of true positives it describes and how many instances of the true class it covers. 
Since  the ``support'' of the concept may affect the value of F1 measure, we consider F1 measure within 10 frequency bins: (0, 0.1], (0.1, 0.2], \ldots, (0.9, 1.0]. The average values for some datasets are given in Fig.~\ref{fig:f1}.

\begin{figure}
    \centering
    \includegraphics[width = 1.\textwidth]{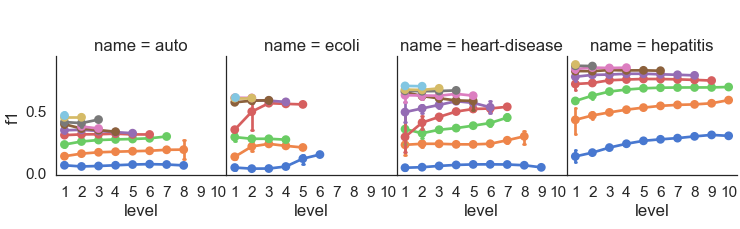}
    \includegraphics[width = 1.\textwidth]{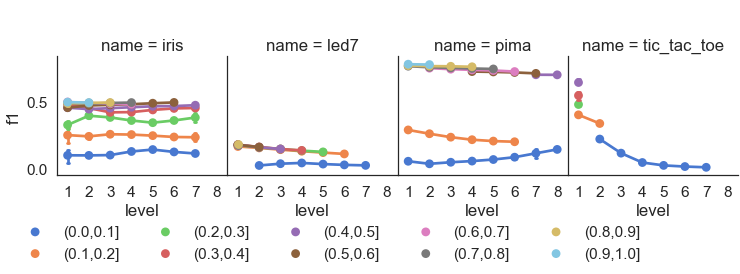}
    \caption{Average F1 measure within frequency bins}
    \label{fig:f1}
\end{figure}

In ``auto'' dataset, for example, the most frequent closed itemsets ($fr. \in (0.9, 1.0]$) are located only at the 1st level and have the largest average F1 value. As the frequency decreases the average F1 measure decreases as well. The smallest average F1 corresponds to the itemsets of frequency $(0, 0.1]$. This dependence of the average F1 values on frequency bins of itemsets remains the same over the levels. For example, at level 7 we observe the highest values for itemsets with frequency in the interval $(0.2, 0.3]$ (the largest frequency at the level). Almost for all datasets the values of the average F1 measure remain quite similar for itemsets within a frequency bin over all closure levels. 


Thus, the quality of itemsets evaluated w.r.t. F1 measure does not change over the levels and depends rather on the frequency of itemsets. 

\section{Discussion and Conclusion}
\label{sec:conclusion}

In this paper we introduced the ``closure structure'' of a dataset, with computational aspects, theoretical and practical properties.
This closure structure provides a view of the ``complexity'' of the dataset and may be used for guiding the mining of the dataset.
We show theoretical results about the ordinal nature of the closure structure, and we state and prove complexity results about the computation of passkeys.
Moreover, the GDPM algorithm and its variations are able to compute the closure structure.
The application of GDPM over a collection of datasets shows a rather efficient behavior of the algorithm.
For synthesizing, the closure structure of a dataset brings the following benefits:

\begin{itemize}
\item
  The closure structure is determined by the closed itemsets and their equivalence classes.
  It is based on a partition related to the smallest keys --smallest itemsets in an equivalence class-- called passkeys and that can represent any closed subset of attributes and related objects without loss.
\item
  The closure structure determines a ``topology'' of a dataset and is organized around a set of levels depending on the size of the passkeys.
  This provides a distribution of the itemsets in terms of their associated passkeys.
  Moreover, the number of levels, called the closure index, can be used to measure the potential size of the pattern space, i.e. given by the minimal Boolean lattice included in the pattern space.
\item
  Given the closure structure and the closure levels, we may understand how itemsets are distributed by levels and know when to stop the search for interesting itemsets with a minimal loss.
  For example, we discovered that the first levels of the closure structure have better characteristics than the other levels.
  In particular, the passkeys of the first levels provide an almost complete covering of the whole dataset under study, meaning also that the first levels are the most interesting and the less redundant in terms of pattern mining.
  In addition, the first levels of the closure structure allow to derive the implications with the smallest antecedents and the maximal consequents, which are considered as the best rules to be discovered in a dataset.
\item
  Finally, we also studied how to use the closure structure with large or very large datasets using sampling.
  We showed that the closure structure is stable under the sampling of objects, provided that the set of attributes remains identical.
  Then, a combination of feature selection and object sampling would be a possible way of dealing with very large datasets.
\end{itemize}

This is the first time that the closure structure is studied on its own and not all aspects were investigated in this paper.
However, the proposed closure structure has some limitations that we aim to improve in future work:
the closure structure provides a very rich framework for understanding the content of a dataset and thus there are many directions for future work.
We sketch some of these future investigations below.

\begin{itemize}
\item
  Considering complex datasets such as numerical datasets, how to generalize the closure structure to the mining of numerical datasets of linked data for example?
  One possible direction to deal with such complex data is to use the formalism of pattern structures, which is an extension of FCA allowing to directly work on complex data such as numbers, sequences, trees, and graphs \citep{kaytoue2011mining}.
\item
  The closure structure is introduced and studied thanks to frequency which depends on the support of itemsets.
  Along with frequency, it could be very interesting to reuse other measures that would bring different and complementary views on the data, such as e.g. stability, lift, and MDL.
\item
  The closure structure is intended in this paper in terms of pattern mining, and more precisely in terms of itemset mining.
  The question of the generalization can be raised, for example how the closure structure could be reused in terms of investigations related to pattern mining, such as subgroup discovery, anytime approaches in pattern mining, and also, as already mentioned above, MDL-based pattern mining?
\item
  Many questions remain when we consider the practical point of view.
  Among these questions, how to define a possibly anytime strategy for pattern mining terms based on the closure structure, knowing in particular that the first levels are the most interesting.
  Moreover, this strategy could also guide the discovery of the most interesting implications and association rules, and take into account the relations with functional dependencies.
\item
  The GDPM algorithm should be improved in terms of computational efficiency --some of them are already present in Talky-G for example-- and as well adapting the algorithm to other datatypes such as numbers, RDF data, as already mentioned above.
  Finally, another important direction of investigation is to study the possibility of designing a depth-first version of GDPM with an adapted canonicity test, which may improve a lot the computation performance.
\end{itemize}

\bibliography{refs}

\begin{thebibliography}{49}
\providecommand{\natexlab}[1]{#1}
\providecommand{\url}[1]{\texttt{#1}}
\expandafter\ifx\csname urlstyle\endcsname\relax
  \providecommand{\doi}[1]{doi: #1}\else
  \providecommand{\doi}{doi: \begingroup \urlstyle{rm}\Url}\fi

\bibitem[Agrawal et~al.(1994)Agrawal, Srikant, et~al.]{agrawal1994fast}
Rakesh Agrawal, Ramakrishnan Srikant, et~al.
\newblock Fast algorithms for mining association rules.
\newblock In \emph{Proceedings of the 20th International Conference on Very
  Large Data Bases}, volume 1215, pages 487--499, 1994.

\bibitem[Albano and Chornomaz(2017)]{albano2017concept}
Alexandre Albano and Bogdan Chornomaz.
\newblock Why concept lattices are large: extremal theory for generators,
  concepts, and {VC}-dimension.
\newblock \emph{International Journal of General Systems}, 46\penalty0 (5),
  2017.

\bibitem[Andrews(2014)]{andrews2014partial}
Simon Andrews.
\newblock A partial-closure canonicity test to increase the efficiency of
  {CbO}-type algorithms.
\newblock In \emph{Proceedings of the 21st International Conference on
  Conceptual Structures}, pages 37--50. Springer, 2014.

\bibitem[Arimura and Uno(2009)]{arimura2009polynomial}
Hiroki Arimura and Takeaki Uno.
\newblock Polynomial-delay and polynomial-space algorithms for mining closed
  sequences, graphs, and pictures in accessible set systems.
\newblock In \emph{Proceedings of the {SIAM} International Conference on Data
  Mining}, pages 1088--1099. SIAM, 2009.

\bibitem[Armstrong(1974)]{armstrong1974dependency}
William~Ward Armstrong.
\newblock Dependency structures of data base relationships.
\newblock In \emph{Proceedings of the 6th {IFIP} congress}, volume~74, pages
  580--583. Geneva, Switzerland, 1974.

\bibitem[Baixeries et~al.(2018)Baixeries, Codocedo, Kaytoue, and
  Napoli]{baixeries2018characterizing}
Jaume Baixeries, Victor Codocedo, Mehdi Kaytoue, and Amedeo Napoli.
\newblock Characterizing approximate-matching dependencies in {F}ormal
  {C}oncept {A}nalysis with pattern structures.
\newblock \emph{Discrete Applied Mathematics}, 249:\penalty0 18--27, 2018.

\bibitem[Bastide et~al.(2000)Bastide, Pasquier, Taouil, Stumme, and
  Lakhal]{bastide2000mining}
Yves Bastide, Nicolas Pasquier, Rafik Taouil, Gerd Stumme, and Lotfi Lakhal.
\newblock Mining minimal non-redundant association rules using frequent closed
  itemsets.
\newblock In \emph{International Conference on Computational Logic}, pages
  972--986. Springer, 2000.

\bibitem[Boley et~al.(2007{\natexlab{a}})Boley, Horv{\'a}th, Poign{\'e}, and
  Wrobel]{boley2007efficient}
Mario Boley, Tam{\'a}s Horv{\'a}th, Axel Poign{\'e}, and Stefan Wrobel.
\newblock Efficient closed pattern mining in strongly accessible set systems.
\newblock In \emph{Proceedings of the 11th European Conference on Principles
  and Practice of Data Mining and Knowledge Discovery}, pages 382--389.
  Springer, 2007{\natexlab{a}}.

\bibitem[Boley et~al.(2007{\natexlab{b}})Boley, Horv{\'{a}}th, Poign{\'{e}},
  and Wrobel]{boley2007mlg}
Mario Boley, Tam{\'{a}}s Horv{\'{a}}th, Axel Poign{\'{e}}, and Stefan Wrobel.
\newblock Efficient closed pattern mining in strongly accessible set systems.
\newblock In \emph{Proceedings of the 5th International Workshop on Mining and
  Learning with Graphs}, 2007{\natexlab{b}}.

\bibitem[Boley et~al.(2010{\natexlab{a}})Boley, G{\"a}rtner, and
  Grosskreutz]{boley2010formal}
Mario Boley, Thomas G{\"a}rtner, and Henrik Grosskreutz.
\newblock Formal concept sampling for counting and threshold-free local pattern
  mining.
\newblock In \emph{Proceedings of the SIAM International Conference on Data
  Mining}, pages 177--188. SIAM, 2010{\natexlab{a}}.

\bibitem[Boley et~al.(2010{\natexlab{b}})Boley, Horv{\'a}th, Poign{\'e}, and
  Wrobel]{boley2010listing}
Mario Boley, Tam{\'a}s Horv{\'a}th, Axel Poign{\'e}, and Stefan Wrobel.
\newblock Listing closed sets of strongly accessible set systems with
  applications to data mining.
\newblock \emph{Theoretical Computer Science}, 411\penalty0 (3):\penalty0
  691--700, 2010{\natexlab{b}}.

\bibitem[Boley et~al.(2011)Boley, Lucchese, Paurat, and
  G{\"a}rtner]{boley2011direct}
Mario Boley, Claudio Lucchese, Daniel Paurat, and Thomas G{\"a}rtner.
\newblock Direct local pattern sampling by efficient two-step random
  procedures.
\newblock In \emph{Proceedings of the 17th {ACM} {SIGKDD} international
  conference on Knowledge discovery and data mining}, pages 582--590. ACM,
  2011.

\bibitem[Boley et~al.(2012)Boley, Moens, and G{\"a}rtner]{boley2012linear}
Mario Boley, Sandy Moens, and Thomas G{\"a}rtner.
\newblock Linear space direct pattern sampling using coupling from the past.
\newblock In \emph{The 18th {ACM} {SIGKDD} International Conference on
  Knowledge Discovery and Data Mining}, pages 69--77. ACM, 2012.

\bibitem[Buzmakov et~al.(2014)Buzmakov, Kuznetsov, and
  Napoli]{buzmakov2014scalable}
Aleksey Buzmakov, Sergei~O Kuznetsov, and Amedeo Napoli.
\newblock Scalable estimates of concept stability.
\newblock In \emph{Proceedings of the 12th International Conference on Formal
  Concept Analysis}, pages 157--172. Springer, 2014.

\bibitem[Buzmakov et~al.(2015)Buzmakov, Kuznetsov, and
  Napoli]{buzmakov2015fast}
Aleksey Buzmakov, Sergei~O Kuznetsov, and Amedeo Napoli.
\newblock Fast generation of best interval patterns for nonmonotonic
  constraints.
\newblock In \emph{Proceedings of the Joint European Conference on Machine
  Learning and Knowledge Discovery in Databases}, pages 157--172. Springer,
  2015.

\bibitem[Codocedo et~al.(2014)Codocedo, Lykourentzou, and
  Napoli]{codocedo2014semantic}
Victor Codocedo, Ioanna Lykourentzou, and Amedeo Napoli.
\newblock A semantic approach to concept lattice-based information retrieval.
\newblock \emph{Annals of Mathematics and Artificial Intelligence}, 72\penalty0
  (1-2):\penalty0 169--195, 2014.

\bibitem[Coenen(2003)]{coenen2003}
F~Coenen.
\newblock The {LUCS-KDD} discretised/normalised {ARM} and {CARM} {Data
  Library}.
\newblock 2003.
\newblock URL
  \url{http://www.csc.liv.ac.uk/\~frans/KDD/Software/LUCS\_KDD\_DN}.

\bibitem[Dzyuba et~al.(2017)Dzyuba, van Leeuwen, and
  De~Raedt]{dzyuba2017flexible}
Vladimir Dzyuba, Matthijs van Leeuwen, and Luc De~Raedt.
\newblock Flexible constrained sampling with guarantees for pattern mining.
\newblock \emph{Data Mining and Knowledge Discovery}, 31\penalty0 (5):\penalty0
  1266--1293, 2017.

\bibitem[Ganter and Wille(1999)]{ganter1999formal}
B~Ganter and R~Wille.
\newblock Formal concept analysis--mathematical foundations, 1999.

\bibitem[Guigues and Duquenne(1986)]{guigues1986familles}
Jean-Louis Guigues and Vincent Duquenne.
\newblock Familles minimales d'implications informatives r{\'e}sultant d'un
  tableau de donn{\'e}es binaires.
\newblock \emph{Math{\'e}matiques et Sciences humaines}, 95:\penalty0 5--18,
  1986.

\bibitem[Gunopulos et~al.(2003)Gunopulos, Khardon, Mannila, Saluja, Toivonen,
  and Sharma]{gunopulos2003discovering}
Dimitrios Gunopulos, Roni Khardon, Heikki Mannila, Sanjeev Saluja, Hannu
  Toivonen, and Ram~Sewak Sharma.
\newblock Discovering all most specific sentences.
\newblock \emph{ACM Transactions on Database Systems (TODS)}, 28\penalty0
  (2):\penalty0 140--174, 2003.

\bibitem[Han et~al.(2000)Han, Pei, and Yin]{han2000mining}
Jiawei Han, Jian Pei, and Yiwen Yin.
\newblock Mining frequent patterns without candidate generation.
\newblock \emph{ACM Sigmod Record}, 29\penalty0 (2):\penalty0 1--12, 2000.

\bibitem[Han et~al.(2011)Han, Pei, and Kamber]{han2011data}
Jiawei Han, Jian Pei, and Micheline Kamber.
\newblock \emph{Data mining: concepts and techniques}.
\newblock Elsevier, 2011.

\bibitem[Hermann and Sertkaya(2008)]{hermann2008complexity}
Miki Hermann and Bar{\i}{\c{s}} Sertkaya.
\newblock On the complexity of computing generators of closed sets.
\newblock In \emph{Proceedings of the 6th International Conference on Formal
  Concept Analysis}, pages 158--168. Springer, 2008.

\bibitem[Hu and Imielinski(2017)]{hu2017alpine}
Qiong Hu and Tomasz Imielinski.
\newblock Alpine: Progressive itemset mining with definite guarantees.
\newblock In \emph{Proceedings of the SIAM International Conference on Data
  Mining}, pages 63--71. SIAM, 2017.

\bibitem[Janostik et~al.(2020)Janostik, Konecny, and Krajča]{janostik2020lcm}
Radek Janostik, Jan Konecny, and Petr Krajča.
\newblock {LCM} is well implemented {CbO}: study of {LCM} from {FCA} point of
  view.
\newblock In \emph{Proceedings of the 15th International Conference on
  Conceptual Structures}, pages 47--58. Tallinn University of Technology, 2020.

\bibitem[Kaytoue et~al.(2011)Kaytoue, Kuznetsov, Napoli, and
  Duplessis]{kaytoue2011mining}
Mehdi Kaytoue, Sergei~O Kuznetsov, Amedeo Napoli, and S{\'e}bastien Duplessis.
\newblock Mining gene expression data with pattern structures in formal concept
  analysis.
\newblock \emph{Information Sciences}, 181\penalty0 (10):\penalty0 1989--2001,
  2011.

\bibitem[Klimushkin et~al.(2010)Klimushkin, Obiedkov, and
  Roth]{klimushkin2010approaches}
Mikhail Klimushkin, Sergei Obiedkov, and Camille Roth.
\newblock Approaches to the selection of relevant concepts in the case of noisy
  data.
\newblock In \emph{Proceedings of the 8th International Conference on Formal
  Concept Analysis}, pages 255--266. Springer, 2010.

\bibitem[Krajca and Vychodil(2009)]{krajca2009comparison}
Petr Krajca and Vilem Vychodil.
\newblock Comparison of data structures for computing formal concepts.
\newblock In \emph{Proceedings of the 6th International Conference on Modeling
  Decisions for Artificial Intelligence}, pages 114--125. Springer, 2009.

\bibitem[Kuznetsov(1993)]{kuznetsov1993fast}
Sergei~O Kuznetsov.
\newblock A fast algorithm for computing all intersections of objects from an
  arbitrary semilattice.
\newblock \emph{Nauchno-Tekhnicheskaya Informatsiya Seriya 2-Informatsionnye
  Protsessy i Sistemy}, \penalty0 (1):\penalty0 17--20, 1993.

\bibitem[Kuznetsov(2007)]{kuznetsov2007stability}
Sergei~O Kuznetsov.
\newblock On stability of a formal concept.
\newblock \emph{Annals of Mathematics and Artificial Intelligence}, 49\penalty0
  (1-4):\penalty0 101--115, 2007.

\bibitem[Kuznetsov and Obiedkov(2002)]{kuznetsov2002comparing}
Sergei~O Kuznetsov and Sergei~A Obiedkov.
\newblock Comparing performance of algorithms for generating concept lattices.
\newblock \emph{Journal of Experimental \& Theoretical Artificial
  Intelligence}, 14\penalty0 (2-3):\penalty0 189--216, 2002.

\bibitem[Makhalova et~al.(2020)Makhalova, Kuznetsov, and Napoli]{mkn2020}
Tatiana Makhalova, Sergei~O. Kuznetsov, and Amedeo Napoli.
\newblock Gradual discovering of closeness structure in lattices.
\newblock In \emph{Proceedings of the 15th International Conference on Concept
  Lattices and Their Applications}, 2020.

\bibitem[Mampaey et~al.(2012)Mampaey, Vreeken, and
  Tatti]{mampaey2012summarizing}
Michael Mampaey, Jilles Vreeken, and Nikolaj Tatti.
\newblock Summarizing data succinctly with the most informative itemsets.
\newblock \emph{ACM (TKDD)}, 6\penalty0 (4):\penalty0 16, 2012.

\bibitem[Pasquier et~al.(1999{\natexlab{a}})Pasquier, Bastide, Taouil, and
  Lakhal]{pasquier1999discovering}
Nicolas Pasquier, Yves Bastide, Rafik Taouil, and Lotfi Lakhal.
\newblock Discovering frequent closed itemsets for association rules.
\newblock In \emph{Proceedings of the 7th International Conference on Database
  Theory}, pages 398--416. Springer, 1999{\natexlab{a}}.

\bibitem[Pasquier et~al.(1999{\natexlab{b}})Pasquier, Bastide, Taouil, and
  Lakhal]{pasquier1999efficient}
Nicolas Pasquier, Yves Bastide, Rafik Taouil, and Lotfi Lakhal.
\newblock Efficient mining of association rules using closed itemset lattices.
\newblock \emph{Information systems}, 24\penalty0 (1):\penalty0 25--46,
  1999{\natexlab{b}}.

\bibitem[Pasquier et~al.(2005)Pasquier, Taouil, Bastide, Stumme, and
  Lakhal]{PasquierTBSL05}
Nicolas Pasquier, Rafik Taouil, Yves Bastide, Gerd Stumme, and Lotfi Lakhal.
\newblock {Generating a Condensed Representation for Association Rules}.
\newblock \emph{{Journal of Intelligent Information Systems}}, 24\penalty0
  (1):\penalty0 29--60, 2005.

\bibitem[Shenoy et~al.(2000)Shenoy, Haritsa, Sudarshan, Bhalotia, Bawa, and
  Shah]{shenoy2000turbo}
Pradeep Shenoy, Jayant~R Haritsa, S~Sudarshan, Gaurav Bhalotia, Mayank Bawa,
  and Devavrat Shah.
\newblock Turbo-charging vertical mining of large databases.
\newblock \emph{Proceedings of the {ACM} {SIGMOD} International Conference on
  Management of Data}, 29\penalty0 (2):\penalty0 22--33, 2000.

\bibitem[Smets and Vreeken(2012)]{smets2012slim}
Koen Smets and Jilles Vreeken.
\newblock Slim: Directly mining descriptive patterns.
\newblock In \emph{Proceedings of the 12 {SIAM} International Conference on
  Data Mining, Anaheim, California}, pages 236--247, 2012.

\bibitem[Stumme et~al.(2002)Stumme, Taouil, Bastide, Pasquier, and
  Lakhal]{stumme2002computing}
Gerd Stumme, Rafik Taouil, Yves Bastide, Nicolas Pasquier, and Lotfi Lakhal.
\newblock Computing iceberg concept lattices with {Titanic}.
\newblock \emph{Data \& knowledge engineering}, 42\penalty0 (2):\penalty0
  189--222, 2002.

\bibitem[Szathmary et~al.(2009)Szathmary, Valtchev, Napoli, and
  Godin]{szathmary2009efficient}
Laszlo Szathmary, Petko Valtchev, Amedeo Napoli, and Robert Godin.
\newblock Efficient vertical mining of frequent closures and generators.
\newblock In \emph{Proceedings of the 8th International Symposium on
  Intelligent Data Analysis}, pages 393--404. Springer, 2009.

\bibitem[Szathmary et~al.(2014)Szathmary, Valtchev, Napoli, Godin, Boc, and
  Makarenkov]{szathmary2014fast}
Laszlo Szathmary, Petko Valtchev, Amedeo Napoli, Robert Godin, Alix Boc, and
  Vladimir Makarenkov.
\newblock A fast compound algorithm for mining generators, closed itemsets, and
  computing links between equivalence classes.
\newblock \emph{Annals of Mathematics and Artificial Intelligence}, 70\penalty0
  (1-2):\penalty0 81--105, 2014.

\bibitem[Uno et~al.(2004{\natexlab{a}})Uno, Asai, Uchida, and
  Arimura]{uno2004efficient}
Takeaki Uno, Tatsuya Asai, Yuzo Uchida, and Hiroki Arimura.
\newblock An efficient algorithm for enumerating closed patterns in transaction
  databases.
\newblock In \emph{Proceedings of the 7th International Conference on Discovery
  Science}, pages 16--31. Springer, 2004{\natexlab{a}}.

\bibitem[Uno et~al.(2004{\natexlab{b}})Uno, Kiyomi, Arimura,
  et~al.]{uno2004lcm}
Takeaki Uno, Masashi Kiyomi, Hiroki Arimura, et~al.
\newblock {LCM} ver. 2: Efficient mining algorithms for frequent/closed/maximal
  itemsets.
\newblock In \emph{Proceedings of the {IEEE} {ICDM} Workshop on Frequent
  Itemset Mining Implementations}, volume 126, 2004{\natexlab{b}}.

\bibitem[Uno et~al.(2005)Uno, Kiyomi, and Arimura]{uno2005lcm}
Takeaki Uno, Masashi Kiyomi, and Hiroki Arimura.
\newblock {LCM} ver. 3: collaboration of array, bitmap and prefix tree for
  frequent itemset mining.
\newblock In \emph{Proceedings of the 1st International Workshop on Open Source
  Data Mining: Frequent Pattern Mining Implementations}, pages 77--86, 2005.

\bibitem[Vreeken and Tatti(2014)]{fpm14interestingness}
Jilles Vreeken and Nikolaj Tatti.
\newblock Interesting patterns.
\newblock In Charu~C. Aggarwal and Jiawei Han, editors, \emph{Frequent Pattern
  Mining}, pages 105--134. Springer, 2014.

\bibitem[Zaki(2000)]{zaki2000generating}
Mohammed~J Zaki.
\newblock Generating non-redundant association rules.
\newblock In \emph{Proceedings of the 6th {ACM} {SIGKDD} International
  Conference on Knowledge discovery and Data Mining}, pages 34--43, 2000.

\bibitem[Zaki and Ogihara(1998)]{zaki1998theoretical}
Mohammed~Javeed Zaki and Mitsunori Ogihara.
\newblock Theoretical foundations of association rules.
\newblock In \emph{Proceedings of the 3rd ACM SIGMOD workshop on Research
  Issues in Data Mining and Knowledge Discovery}, pages 71--78, 1998.

\bibitem[Zaks(1980)]{zaks1980lexicographic}
Shmuel Zaks.
\newblock Lexicographic generation of ordered trees.
\newblock \emph{Theoretical Computer Science}, 10\penalty0 (1):\penalty0
  63--82, 1980.

\end{thebibliography}

\newpage
\appendix
\section{Case study} \label{sec:market_basket_analysis}

In this paper we discussed the importance of the closure structure in finding association rules. In this section we demonstrate how the closure structure may be used to obtain exact association rules (implications) in market basket analysis. We consider a dataset from Kaggle repository\footnote{https://www.kaggle.com/apmonisha08/market-basket-analysis}. It consists of 7835 objects and 167 attributes, the dataset density is 0.026. Frequencies of attributes are shown in Fig.~\ref{fig:freq_attributes} in ascending order. This frequency distribution is typical for such kind of data - there is a few very frequent items, e.g., `whole milk', `other vegetables', `soda', `rolls/buns', `yogurt',  `bottled water', `root vegetables', `shopping bags' and `tropical fruit', and a large amount of very infrequent items.

\begin{figure}[h!]
    \centering
    \begin{minipage}{.48\textwidth}
        \includegraphics[width = 1.\textwidth]{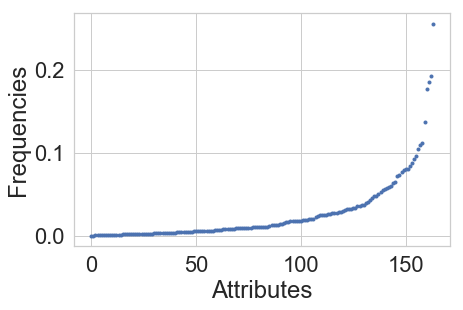}
    \end{minipage}
    \caption{Frequency of attributes in ascending order}
    \label{fig:freq_attributes}
\end{figure}

The standard approach for computing association rules consists of two stages: (i) enumerating all frequent (closed) itemsets and (ii) computing the rules by splitting each itemset into an antecedent and consequent in all possible ways. We use MLxtend library\footnote{https://pypi.org/project/mlxtend/} to generate association rules. The number of enumerated frequent itemsets is 637606, frequency threshold is 0.0005.

To compare the standard approach with one based on the closure structure, we consider rules with confidence equal to 1 (i.e., implications) and order them by lift. The top 5 rules are given in Table~\ref{tab:frequent_conf_0005}, they are very similar, e.g., the 2nd and 4th rules have the same antecedent and almost the same consequents. 

\begin{table}[h!]
    \caption{Top-5 association rules with confidence equal to 1 ordered by lift}
    \label{tab:frequent_conf_0005}
    \setlength{\tabcolsep}{2pt}    
\begin{tabular}{L{4.5cm}L{4.5cm}|c|c|c} \toprule
    antecedents & consequents &support&conf&lift\\ \midrule
    \nt{`frankfurter', `frozen fish',\\`whole milk', `pip fruit'}&`frozen meals', `other vegetables', `yogurt', `tropical fruit'&0.0005&1&1567.0\\\hline
    `grapes', `frozen fish'&`frozen meals', `pip fruit', `other vegetables',`tropical fruit', `whole milk'&0.0005&1&1119.3\\\hline
    \nt{`frozen meals', `other vegetables',\\`yogurt', `frankfurter'}&\nt{`tropical fruit', `frozen fish',\\ `whole milk', `pip fruit'}&0.0005&1&1119.3\\\hline
    `grapes', `frozen fish'&\nt{`frozen meals', `other vegetables',\\`tropical fruit', `pip fruit'}&0.0005&1&870.6\\\hline
    `grapes', `whole milk', `frozen fish'&\nt{`frozen meals', `other vegetables',\\`tropical fruit', `pip fruit'}&0.0005&1&870.6\\\bottomrule
    \end{tabular}
\end{table}

Because of the redundancy problem, it is hard to analyze the obtained rules and to get an overall idea about them. However, the closure structure provides an exhaustive set of non-redundant rules and provides a clear representation of them. 

Table~\ref{tab:level_summary} contains the description of the computed closure structure. Each row corresponds to a closure level, each column correspond to a size of closed itemsets. All the implications $X\rightarrow Y$ summarized in the $k$-th row of the table have the smallest possible antecedent (which is a passkey) among all the equivalent implications (i.e., the implications $X^\ast\rightarrow Y^\ast$ with confidence 1, such that $X^\ast \cup Y^\ast = X \cup Y$). The number of implications is computed up to equivalence classes, i.e., if there are several implications with minimum antecedents of size $k$ in an equivalence class, we count only.

A cell contains the number (and the average support) of itemsets of a certain size belonging to a certain level. For example, the 1st level contains 163 itemsets of size 1 (singletons) with the average support 211.4$\pm$304.1 and 1 itemset of size 5 having support 2. The 2nd closure level contains itemsets of the sizes from 2 to 14. The itemsets that are located in the diagonal (in gray) can be used to build implications since their consequents are empty. Analyzing the itemsets of size at least 3 of the 2nd level we may conclude that there exists 629 implication with the consequents of size 1 and average support 2.8$\pm$1.1, 264 implications with the consequents of size 2 and average support 2.3$\pm$0.6, etc. The largest consequents of the implications with antecedent 2 have size 10, 11, 12, their support is 2.

\begin{table}
    \caption{The most frequent implications obtained from the itemsets of the 2nd level}
    \label{tab:implications_2nd_level}
    \begin{tabular}{L{5cm}L{5cm}|c} \toprule
    antecedents & consequents & support \\ \midrule
    {`herbs', `long life bakery product'}&  `whole milk', `{root vegetables'} &  5 \\ \hline
    {`specialty chocolate', `semi-finished bread'}& {`pastry'}, `whole milk'&   5 \\  \hline
    `citrus fruit', `rubbing alcohol', & `{root vegetables}', `whole milk'& 4 \\\hline
    {`root vegetables', `rubbing alcohol'}& {`citrus fruit'}, `whole milk'& 4 \\ \hline
    `curd cheese', `hygiene articles' &'other vegetables', `yogurt'& 4 \\ \hline
    {`grapes', `frozen fish' }&{`frozen meals',} `whole milk', `other vegetables', {`pip fruit'}, `tropical fruit'& 4 \\ \hline
    `frozen fish', `soft cheese' &`yogurt', `root vegetables'& 4 \\ \hline
    `turkey', `hamburger meat'& `yogurt', `whole milk'& 4 \\ \hline
    `spread cheese', `hygiene articles'& `other vegetables', `yogurt' & 4 \\ \hline
    `sugar', `meat spreads' & `yogurt', `whole milk' & 4 \\ \hline
    {`waffles', `nuts/prunes'}& {`newspapers'}, `whole milk' &4 \\ \hline
    {`specialty fat', `pip fruit'}& {`fruit/vegetable juice'}, `whole milk' & 4 \\ \hline
    {`rice', `roll products'}& `other vegetables', {`fruit/vegetable juice'} & 4 \\ \hline
    \end{tabular}
\end{table}

In Table~\ref{tab:implications_2nd_level} we list some implications obtained from itemsets from the 2nd level. Among the obtained implications, we can observe quite specific implications, e.g., if one buys `specialty chocolate' and `semi-finished bread' one buys `pastry' as well, or if one buys `waffles' and `nuts/prunes' one buys `newspapers' as well. 

One of the benefits of the proposed approach is that (i) we deal with easy interpretable association rules, i.e. those that have the smallest possible antecedent and the largest possible consequent, (ii) the closure structure gives the exact number of rules with a chosen sizes of antecedents and consequents, more over these rules are non-redundant. 

As we can seen, the proposed approach provides implications that can be more easily analyzed by experts. However, the proposed approach has some drawbacks: the implications are more suitable for ``deterministic'' data (see Section~\ref{sec:related_work}), where the closure structure may serve a good tool to obtain the data summary. 
For noisy datasets or datasets where the appearance of an item in a transaction (row) is rather probabilistic (e.g., an item appears in a shopping cart with a certain probability) computing the exact structure maybe be inappropriate because of its sensitivity to noise (a lot of noisy unstable itemsets may appear). Thus, an important direction of future work is to generalize the notion of closure structure for approximate tiles (biclusters), i.e., dense rectangles in data, to deal with ``approximate closures''. 

\begin{sidewaystable}
    \caption{Summary of the levels of the closure structure (for itemsets with support at least 1)}
    \label{tab:level_summary}
\setlength{\tabcolsep}{4pt}
\adjustbox{max width=\textwidth}{%
\begin{tabular}{l|l|cccccccccccccccc}\toprule
 &  & 1 & 2 & 3 & 4 & 5 & 6 & 7 & 8 & 9 & 10 & 11 & 12 & 13 & 14 & 15 & 16 \\ \midrule
\multirow{2}{*}{1} & count &{\color[HTML]{9B9B9B}163}&  &  &  & 1 &  &  &  &  &  &  &  &  &  &  &  \\
 &sup.&{\color[HTML]{9B9B9B}211.4+304.1}&  &  &  & 2.0+0.0 &  &  &  &  &  &  &  &  &  &  &  \\\hline
\multirow{2}{*}{2} & count &  &{\color[HTML]{9B9B9B}5637}& 629 & 264& 136& 71& 26 & 17 & 14 & 3 & 2 & 1& 1 & 1 &  &  \\
 &sup.&  &{\color[HTML]{9B9B9B}18.3+31.6}& 2.8+1.1 & 2.3+0.6 & 2.1+0.3 & 2.1+0.4 & 2.1+0.4 & 2.1+0.2 & 2.0+0.0 & 2.0+0.0 & 2.0+0.0 & 2.0+0.0 & 2.0+0.0 & 2.0+0.0 &  &  \\\hline
\multirow{2}{*}{3} & count &  &  &{\color[HTML]{9B9B9B}27051}& 7655& 3409& 1483 & 677& 294 & 108 & 57 & 21 & 8 & 3 & 3 & 1 & 1 \\
 &sup.&  &  &{\color[HTML]{9B9B9B}6.8+7.7}& 2.8+1.1 & 2.4+0.7 & 2.2+0.5 & 2.1+0.4 & 2.1+0.3 & 2.1+0.2 & 2.1+0.3 & 2.0+0.0 & 2.0+0.0 & 2.0+0.0 & 2.0+0.0 & 2.0+0.0 & 2.0+0.0 \\\hline
\multirow{2}{*}{4} & count &  &  &  &{\color[HTML]{9B9B9B}28841}& 10115& 3526& 1211 & 376& 84 & 24 & 6 & 1 &  &  &  &  \\
 &sup.&  &  &  &{\color[HTML]{9B9B9B}4.6+3.3}& 2.9+1.1 & 2.5+0.8 & 2.3+0.6 & 2.2+0.4 & 2.1+0.3 & 2.1+0.3 & 2.0+0.0 & 2.0+0.0 &  &  &  &  \\\hline
\multirow{2}{*}{5} & count &  &  &  &  &{\color[HTML]{9B9B9B}7681}& 2258 & 463& 96 & 14 &  &  &  &  &  &  &  \\
 &sup.&  &  &  &  &{\color[HTML]{9B9B9B}4.0+2.1}& 3.0+1.2 & 2.6+0.8 & 2.3+0.6 & 2.2+0.4 &  &  &  &  &  &  &  \\\hline
\multirow{2}{*}{6} & count &  &  &  &  &  &{\color[HTML]{9B9B9B}504}& 106& 7&  &  &  &  &  &  &  &  \\
 &sup.&  &  &  &  &  &{\color[HTML]{9B9B9B}3.7+1.6}& 3.1+0.9 & 2.7+0.7 &  &  &  &  &  &  &  &  \\\hline
\multirow{2}{*}{7} & count &  &  &  &  &  &  &{\color[HTML]{9B9B9B}3}&  &  &  &  &  &  &  &  &  \\
 &sup.&  &  &  &  &  &  &{\color[HTML]{9B9B9B}4.3+0.5}&  &  &  &  &  &  &  &  & \\\bottomrule
\end{tabular}
}
\end{sidewaystable}

\end{document}